\documentclass[11pt]{amsart}

\usepackage{booktabs}

\usepackage{amsmath,amsfonts,amsthm,mathrsfs,xspace,graphicx}
\usepackage[
colorlinks,bookmarks=true, citecolor=blue!90!black, urlcolor=blue!90!black,linkcolor=blue!90!black, pdfencoding=auto, psdextra]{hyperref}
\hypersetup{pdfauthor={Name}}
\usepackage{endnotes}
\usepackage{color}
\usepackage{float}
\usepackage{subcaption}
\usepackage{xprintlen}
\usepackage{array}
\newcolumntype{C}[1]{>{\centering\arraybackslash}p{#1}}
\newcolumntype{L}[1]{>{\raggedright\arraybackslash}p{#1}}

\usepackage{xcolor}
\definecolor{since}{rgb}{0.5,0.5,0.5}
\definecolor{newred}{HTML}{ff382e}
\definecolor{newgreen}{HTML}{549641}
\definecolor{newblue}{HTML}{4c4cfc}
\definecolor{neworange}{HTML}{c98702}

\usepackage{mdframed}
\usepackage{framed}
\usepackage{bbm}
\usepackage{suffix} 
\usepackage{tabularx}
\usepackage{makecell}
\usepackage{amssymb,latexsym}
\usepackage[capitalize]{cleveref}
\usepackage{enumitem}
\usepackage{multirow}
\usepackage{endnotes}
\usepackage{amssymb,latexsym}
\usepackage{enumitem}
\usepackage{float}
\usepackage{setspace}
\usepackage{soul}
\usepackage[section]{placeins}
\usepackage{thmtools}
\usepackage{thm-restate}
\usepackage{stmaryrd}
\usepackage{esvect}
\usepackage{mathtools, physics}
\usepackage{complexity}
\usepackage{verbatim}

\usepackage{algorithm}
\usepackage{algpseudocode}

\makeatletter
\renewcommand*\env@matrix[1][*\c@MaxMatrixCols c]{%
  \hskip -\arraycolsep
  \let\@ifnextchar\new@ifnextchar
  \array{#1}}
\makeatother

\newtheorem{theorem}{Theorem}[section] 
\newtheorem*{theorem*}{Theorem}
\newtheorem{proposition}[theorem]{Proposition}
\newtheorem*{proposition*}{Proposition}
\newtheorem{lemma}[theorem]{Lemma}
\newtheorem*{lemma*}{Lemma}
\newtheorem{corollary}[theorem]{Corollary}
\newtheorem*{corollary*}{Corollary}

\newtheorem*{conjecture*}{Conjecture}

\theoremstyle{definition}

\newtheorem*{fact*}{Fact}

\newtheorem{definition}[theorem]{Definition}
\newtheorem*{definition*}{Definition}

\DeclareRobustCommand{\EndDef}

\theoremstyle{remark}
\newtheorem{remark}[theorem]{Remark}
\AtEndEnvironment{remark}{\EndDef}
\AtEndEnvironment{definition}{\EndDef}
\newtheorem*{remark*}{Remark}

\DeclareMathOperator\im{im}
\DeclareRobustCommand{\euler}{\genfrac{\langle}{\rangle}{0pt}{}}

\DeclareMathOperator\Span{Span}
\DeclareMathOperator\wt{wt}

\DeclareMathOperator{\supp}{supp}

\DeclareMathOperator\dist{dist}

\usepackage{accents}

\newcommand\remove[1]{}{}

\newcommand{\br}[1]{\ensuremath{\left\{{#1}\right\}}}

\newclass{\QC}{QC}
\newcommand{\Coxeter}[1]{\ensuremath{\C_{W}({#1})}}

\newcommand{\coxeter}[3]{\ensuremath{\C_{{#1}}({#3})}}

\newcommand{\1}{{\mathbbm 1}}

\newcommand{\FF}{\ensuremath{\mathbb{F}}}

\newcommand{\ZZ}{\ensuremath{\mathbb{Z}}}

\newcommand{\mcB}{\ensuremath{\mathcal{B}}} 
\newcommand{\mcC}{\ensuremath{\mathcal{C}}}

 \newcommand{\bfs}{\ensuremath{\boldsymbol{s}}}
 \newcommand{\bft}{\ensuremath{\boldsymbol{t}}}

\DeclareMathAlphabet{\mathbbold}{U}{bbold}{m}{n}

\usepackage[margin=1.1in]{geometry}
\usepackage{enumitem,diagbox}
\usepackage{stfloats,nicefrac}

\newcommand{\parasum}[1]{b_{#1}}
\newcommand{\icomp}{{I^c}}
\newcommand{\rhosi}{\rho_{s,I}}
\newcommand{\parak}[1]{\mathcal{I}_{#1}}

\newcommand{\longest}[1]{w_0^{#1}}
\newcommand{\minrep}[1]{\mathrm{MinRep}({#1})}
\newcommand{\maxrep}[1]{\mathrm{MaxRep}({#1})}

\usepackage{cite}
\usepackage[normalem]{ulem}
\newcommand\redout{\bgroup\markoverwith{\textcolor{red}{\rule[0.5ex]{2pt}{0.8pt}}}\ULon}

\makeatletter
\newcommand\footnoteref[1]{\protected@xdef\@thefnmark{\ref{#1}}\@footnotemark}
\makeatother
\allowdisplaybreaks
\usepackage{balance}

\title[Distance and Decoding of Coxeter Codes]{Minimum Distance and Decoding of Coxeter Codes}

\author[A. Barg]{Alexander Barg}\address{University of Maryland, College Park, MD 20817, USA}\email{abarg@umd.edu} 
\author[Q. R. Gashi]{Q{\"e}ndrim R. Gashi}\address{University of Maryland, College Park, MD 20817, USA \& University of Prishtina, Kosovo}\email{qgashi@umd.edu}
\author[T. Xu]{Tianyuan Xu}\address{University of Richmond, Richmond, VA 23173, USA}\email{txu@richmond.edu}
\date{}

\begin{document}

\begin{abstract}
A binary Coxeter code associated with a finite Coxeter system $(W,S)$ is an $\FF_2$-linear span of indicators of standard cosets of a fixed rank. Coxeter codes, introduced in a recent paper by N. Coble and A. Barg,  are a generalization of Reed--Muller codes which arise when $W=\ZZ_2^m$ is the Coxeter group of type $mA_1$. 
In that paper, the authors proposed a conjectural value for the minimum distance of a general Coxeter code. This conjecture is proved in the present work. As a consequence, we obtain a Coxeter-theoretic generalization of Reed's majority-logic decoding algorithm for Reed--Muller codes.
\end{abstract}

\maketitle

\section{Introduction}
Let $(W,S)$ be a finite Coxeter system, where $W$ is a finite group with a generating set $S=\{s_1,\dots,s_m\}$ satisfying the defining relations $(s_is_j)^{M(i,j)}=1$ with $M(i,i)=1$ and $M(i,j)=M(j,i)\ge 2$ for $i\neq j$. A standard (left) coset of rank $r$ in $W$ is a coset of the form $wW_I$, where $w\in W, I \subseteq S$, and $W_I$ is a standard parabolic subgroup of rank $|I|=r$. We refer to \cite{BB05} for an introduction to the combinatorics of Coxeter groups. 

Recently, Coble and Barg \cite{coble2026coxeter} introduced a class of binary linear codes spanned by indicators of standard cosets of $W$ of a fixed rank. Specifically, they defined a {\em Coxeter code} $\Coxeter{r}$ of order $r\in\{-1,0,\dots,m\}$ to be the $\FF_2$-linear span of indicator vectors of the standard cosets of rank $m-r$:
   \begin{equation}\label{def: CC}
    \Coxeter{r} :=\Span\br{\1_{wW_I}\mid w\in W, I\subseteq S, \abs{I}=m-r}.
    \end{equation}
The value $-1$ is included for convenience since it enables one to properly formulate the duality results for Coxeter codes \cite[Sec.3]{coble2026coxeter}. At the same time, we have $\Coxeter{-1}=\{0\}$ by definition. The codes in ~\eqref{def: CC} form a direct generalization of a classic family of binary codes known as Reed--Muller codes $RM(r,m)$, which correspond to Coxeter type $mA_1$, where $S=\{e_1,\dots,e_m\}$ and $W=\ZZ_2^m$ is a direct product of $m$ permutation groups on 2 elements, with generators corresponding to the vectors of the standard basis. Reed--Muller codes have been extensively
studied in classical coding theory \cite[Ch.13--15]{MS77}, \cite{Assmus98}, \cite{abbe2023reed}, and they also give rise to a family of quantum CSS codes with a range of well-understood properties \cite{barg2024geometric}.

The Coxeter code $\Coxeter{r}$ has length $\abs{W}$ by 
construction, and its dimension is 
    \begin{equation}\label{eq:dimension}
        \dim \Coxeter{r} = \sum_{i=0}^r \euler{W}{i},
    \end{equation}
where the $W$-Eulerian number $\euler{W}{i}, i\in\{0,\dots,m\}$ is the count of elements in $W$ with descent number equal to $i$, \cite[Sec.7.2]{BB05}. For the classical Reed--Muller code $RM(r,m)$, this dimension reduces to the familiar expression $\sum_{i=0}^r \binom mi$; see
\cite{coble2026coxeter} for a proof for general $W$. 

In addition to length and dimension, the third important parameter of a code $C$ is its distance $\dist (C)$, which equals the minimum Hamming distance between two distinct codewords. For a linear code, one has $\dist (C)=\min_{x\in C\backslash\{0\}}\wt(x)$, where $\wt$ denotes the Hamming weight. The classical Reed--Muller codes are well known to have distance $\dist(RM(r,m))=2^{m-r}$ \cite[Thm.13.3]{MS77}. For a general Coxeter code $\Coxeter{r}$, it is clear from \eqref{def: CC} that $\dist(\Coxeter{r})$ is at most the size of the smallest standard parabolic subgroup of rank $m-r$. 
Coble and Barg conjectured that this upper bound is in fact the exact distance for general Coxeter codes.
In this paper, we prove this conjecture:

\begin{theorem}\label{thm: main}
 Let $(W,S)$ be a finite Coxeter system of rank $m$ and let $d_r:=\min_{I\subseteq S, \abs{I}=m-r} |W_I|$
 be the order of the smallest parabolic subgroup of rank $m-r$.
    Then
       $$
       \dist(\Coxeter{r})=d_r.
       $$
\end{theorem}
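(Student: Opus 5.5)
We need two inequalities. The upper bound $\dist(\Coxeter{r})\le d_r$ is immediate: the indicator $\1_{W_I}$ of a smallest standard parabolic subgroup of rank $m-r$ is a nonzero codeword of weight $d_r$. The real content is the lower bound $\wt(c)\ge d_r$ for every nonzero $c\in\Coxeter{r}$. I would prove it by induction on the rank $m$, using a Coxeter analogue of the Plotkin $(u\,|\,u+v)$ decomposition of Reed--Muller codes. The base cases are $m=0$, and the extreme orders $r=-1$ and $r=m$ (where $d_m=1$).

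\textbf{Restriction to cosets of a maximal parabolic.} Fix $s\in S$ and put $J=S\setminus\{s\}$. Then $W$ is the disjoint union of the cosets $wW_J$, where $w$ runs over minimal coset representatives. I would use the standard fact that a nonempty intersection of two standard cosets $vW_I\cap wW_J$ is a standard coset $uW_{I\cap J}$. Translating by $w^{-1}$ identifies $wW_J$ with $W_J$. Under this identification the restriction of a generator $\1_{vW_I}$ (with $|I|=m-r$) becomes either $0$ or an indicator of a standard coset in $W_J$ of rank $|I\cap J|$. This gives two cases:
\begin{itemize}
\item if $s\notin I$, the rank is $m-r=(m-1)-(r-1)$, so the restriction lies in $\coxeter{W_J}{}{r-1}$;
\item if $s\in I$, the rank is $m-r-1=(m-1)-r$, so the restriction lies in $\coxeter{W_J}{}{r}$.
\end{itemize}
Hence for $c\in\Coxeter{r}$, each restriction $c|_{wW_J}$ lies in $\coxeter{W_J}{}{r}$.

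\textbf{Case split.} By induction, a nonzero restriction has weight at least $d_r(W_J)$. It has weight at least $d_{r-1}(W_J)$ if it lies in the smaller code $\coxeter{W_J}{}{r-1}$. Since $d_{r-1}(W_J)$ is a minimum over $I\subseteq J$ with $|I|=m-r$, it is a minimum over a subfamily of the sets defining $d_r$, so $d_{r-1}(W_J)\ge d_r$. So if some nonzero restriction lies in $\coxeter{W_J}{}{r-1}$, we are done. The remaining case is when every nonzero restriction lies in $\coxeter{W_J}{}{r}$ but not in $\coxeter{W_J}{}{r-1}$. Here I want to show that $c$ is nonzero on sufficiently many cosets of $W_J$.

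To this end, choose $s$ cleverly. Let $I_0$ be a set with $|I_0|=m-r$ attaining $d_r=|W_{I_0}|$, and take $s\in I_0$. Then $d_r(W_J)\le |W_{I_0\setminus\{s\}}|$, and $d_r=|W_{I_0\setminus\{s\}}|\cdot k$, where $k=[W_{I_0}:W_{I_0\setminus\{s\}}]$. The goal becomes: $c$ is nonzero on at least $k$ cosets of $W_J$, or more precisely, the sum of weights over the nonzero cosets reaches $d_r$.

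\textbf{Main obstacle.} This counting step is where I expect the real difficulty. My first attempt is a "derivative" argument. Consider how $c$ changes when we move between cosets of $W_J$ lying in the same coset of $W_{J\cup\{s\}}$ connected by right multiplication by $s$. For a generator with $s\notin I$, the restrictions to adjacent cosets differ by a word of order $r-1$. For $s\in I$, the generator spans a full $W_{\{s\}\cup(I\cap J)}$-orbit of cosets. I would formalize this as a Plotkin-type statement: if $c|_{wW_J}\notin \coxeter{W_J}{}{r-1}$, then along the chain of cosets $wW_J,\, wsW_J,\dots$ inside $wW_{I_0}W_J$, the top-order part cannot cancel until at least $k$ cosets have been visited. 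Failing that, I would fall back on a different route: choose instead a minimal-length element in $\supp(c)$ and show, via descent sets and the exchange property, that $\supp(c)$ must contain a translate of a set of size $d_r$, in the spirit of the classical minimum-weight argument for Reed--Muller codes.
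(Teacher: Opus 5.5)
\textbf{There is a genuine gap: the hard case is left as a plan, and the plan cannot close with your induction hypothesis.}

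Several parts are correct. Your upper bound holds. Your restriction step holds: a nonempty $vW_I\cap wW_J$ equals $xW_{I\cap J}$ by (C4), so $w^{-1}(c|_{wW_J})\in C_{W_J}(r)$. The first branch of your case split also holds: a nonzero restriction lying in $C_{W_J}(r-1)$ has weight at least $d_{r-1}(W_J)\ge d_r$.

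The remaining case, however, is the whole content of the theorem, and there you offer a plan rather than an argument. Induction on rank only guarantees weight at least $d_r(W_J)=\min_{I'\subseteq J,\,|I'|=m-r-1}|W_{I'}|$ on each nonzero coset. This can be strictly less than $|W_{I_0\setminus\{s\}}|$, so even proving ``$c$ is nonzero on at least $k$ cosets'' yields only $k\,d_r(W_J)$. Take $W=S_5$ (type $A_4$), $r=1$, $I_0=\{s_1,s_2,s_4\}$, $s=s_4$. Then $d_1=12$ and $k=2$, but $d_1(W_J)=|W_{\{s_1,s_3\}}|=4$, so $k\,d_1(W_J)=8<12$.

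Two further problems:
\begin{itemize}
\item The top-order part of $c$ may have a type $I\ni s$ other than $I_0$, so $k$ is not the relevant count.
\item The ``chain $wW_J, wsW_J,\dots$'' is not well defined. Right multiplication by $s$ does not act on left cosets of $W_J$, since $wsW_J$ depends on the representative $w$. The cosets of $W_J$ inside $wW_IW_J$ are indexed by $W_I/W_{I\setminus\{s\}}$ and carry no chain structure.
\end{itemize}

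\textbf{The missing ingredient is a type-refined bound.} If the type-$I$ component of $c$ in $C_W(r)/C_W(r-1)$, namely $\pi_{I^c}(c)$, is nonzero, then $\wt(c)\ge |W_I|$. Your scheme would need this in two places:
\begin{itemize}
\item inside $W_J$: a coset where $\pi_{I^c}(c)$ is nonzero has weight at least $|W_{I\setminus\{s\}}|$, which is stronger than $\dist(C_{W_J}(r))=d_r(W_J)$;
\item in $W$: $\pi_{I^c}(c)$ is nonzero on at least $[W_I:W_{I\setminus\{s\}}]$ cosets of $W_J$.
\end{itemize}
This is precisely the paper's shadow-distance result $\dist(D_I)=|W_I|$. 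The paper proves it by retracting $W/W_{I^c}$ onto the apartment $\{xuW_{I^c}:u\in W_I\}$ through a chamber in the support. The retraction commutes with the coarsenings $\rho_s$ for $s\in I$, and the chosen chamber is its own unique preimage. The relations $\rho_s(z)=0$ then force the retracted word to be the full apartment indicator.

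Once you have this bound, two facts finish the proof by induction on $r$ alone, with no Plotkin decomposition: the exact sequence $0\to C_W(r-1)\to C_W(r)\to\bigoplus_{|I|=m-r}D_I\to 0$, and the fact that $\pi_{I^c}$ does not increase weight. Your fallback via a minimal-length support element and the exchange property is likewise only a sketch and does not supply this bound.
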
 

Our proof of Theorem \ref{thm: main} proceeds by reduction to certain projections of the code $\Coxeter{r}$ which we call {\em shadow codes}, and which we explore on their own merit, finding their bases and parameters. The core part of the proof is a lower bound on the distance of a shadow code, which is then taken back to the main code by peeling off fixed-rank layers starting from the top level. While our arguments are phrased in combinatorial terms, the key ideas of the proof can be adapted to produce a decoding algorithm for Coxeter codes that generalizes Reed's majority-logic decoding algorithm for Reed--Muller codes  \cite{reed1954codes}. As one of our results, we uncover this connection in Sec.~\ref{sec:decoding}. Our decoding algorithm 
presented in Sec.~\ref{sec: decoder} also yields another proof of Theorem~\ref{thm: main}.


\section{Preliminaries}
\label{sec:notation} {All codes that we mention are binary linear codes, all group actions are left actions, and all modules are left modules.} 
For the rest of the paper, let $(W, S)$ be a finite Coxeter system with rank
$\abs{S}=m$. Let $\ell$ and $\le$ be the length function and Bruhat order on $W$,
respectively. We recall that a \emph{right descent} of an element $w\in W$ is an
element $s\in S$ such that $ws<w$ (equivalently, such that $\ell(ws)<\ell(w)$),
and we write $D_R(w)$ for the set of all right descents of $w$. For every
$K\subseteq S$, we denote the set $S\setminus K$ by $K^c$ and the standard parabolic subgroup $\langle K\rangle$ of
$W$ by $W_K$, we define 
\[ 
  W^K=\{w\in W:D_R(w)\cap K=\varnothing\},
\]
and we set
\[\euler{W}{K}=\{w\in W: D_R(w)=K\}.\]
The numbers $|\euler{W}{K}|$ refine the \emph{Eulerian numbers} $\euler{W}{i}$, defined as
\[
\euler{W}{i}=\abs{\{w\in W:\abs{D_R(w)}=i\}},\]
in that $\euler{W}{i}=\sum_{K\subseteq S, \abs{K}=i}|{\euler{W}{K}}|$.

Let  
\[
w\star s:=
\begin{cases}
ws,&\text{if } ws>w,\\
w,&\text{if } ws<w
\end{cases}
\]
for any $w\in W$ and $s\in S$. For any word $\bfs=s_1\cdots s_m$, the \emph{Demazure product} \cite{demazure} of $\bfs$ is the recursively defined  element
\[
\delta(\bfs)=(((e\star s_1)\star s_2)\cdots)\star s_m \in W.
\]
Note that if $w\in W$ is the element expressed by $\bfs$, then $\delta(\bfs)=w$ if and only if $\bfs$ is a reduced word of $w$.

We will frequently use the following standard facts in this paper.
\begin{itemize}
    \item[(C1)] \cite[Sec.2.2]{BB05}
    \makeatletter\def\@currentlabel{C1}\makeatother\label{cond: subword}
    If $w,u\in W$ and $\bfs=s_1\cdots s_m$ is a reduced word of $w$, then $u\le w$ if and only if some subword of $\bfs$ is a reduced word of $u$, where by a subword of $\bfs$ we mean a word of the form $s_{i_1}\cdots s_{i_k}$ for some $1\le i_1<\cdots<i_k \le m$.
    \item[(C2)] \cite[Sec.2.3]{jahnstump} (cf. \cite[Lemma 3.4]{knutson2004subword})
    \makeatletter\def\@currentlabel{C2}\makeatother\label{cond: demazure}
For any word $\bfs=s_1\cdots s_m$, the Demazure product
$\delta({\bfs})$ is the unique Bruhat-maximal element of $W$ that can be
obtained as the product of a subword of $\bfs$. In particular, $\bfs$ contains a reduced word of $\delta(\bfs)$ as a subword.
  \item[(C3)] \cite[Sec.2.3]{BB05} \makeatletter\def\@currentlabel{C3}\makeatother\label{cond: longest} 
  A finite Coxeter system $(W',S')$ has 
    a unique maximal element, $w_0$, with respect to the Bruhat order. 
    The element $w_0$ has order 2, and it is the unique element of $W'$ such that
    $D_R(w_0)=S'$. We also have $D_R(w_0w)=S'\setminus D_R(w)$ for all $w\in W'$.
  \item[(C4)] \cite[Sec.2.4]{BB05}\makeatletter\def\@currentlabel{C4}\makeatother\label{cond: intersect} 
  For any $I\subseteq S$, the pair $(W_I,I)$ is
    also a Coxeter system, and for every $w\in W_I$ we have $\ell_I(w)=\ell(w)$,
    where $\ell_I$ stands for the length function associated with $(W_I,I)$. For
    any $I, J\subseteq S$, we have $W_I\cap W_J=W_{I\cap J}$.
\item[(C5)] \cite[Sec.2.4]{BB05}
\makeatletter\def\@currentlabel{C5}\makeatother\label{cond: factor}
For every $w\in W$ and $I\subseteq S$, there is
  a unique factorization $w=w^I w_I$, called the \emph{(left) coset
    factorization
  of $w$ with respect to $I$}, such that $w^I\in W^I$ and $w_I\in
  W_I$, and for this factorization we have $\ell(w)=\ell(w^I)+\ell(w_I)$.
\end{itemize}

For each $I\subseteq S$, we will denote the longest element of the parabolic
subgroup $W_I\le W$ by
$\longest{I}$. The following well-known facts follow
readily from \eqref{cond: longest}--\eqref{cond: factor} and will be particularly useful in Sec. \ref{sec:shadow}. 
\begin{itemize}
  \item[(C6)] \makeatletter\def\@currentlabel{C6}\makeatother\label{cond: coset_rep}  For any $I\subseteq S$ and any $w=w^I\cdot w_I\in W$, the element
    $w^I$ is the unique minimal-length representative of the left
    coset $wW_I$, and the element $w^I\longest{I}$ is the unique maximal-length coset
  representative of $wW_I$.  
The set of all minimal-length representatives of
the left cosets of $W_I$, $\minrep{I}$, coincides with $W^I$; the set of all
maximal-length representatives of
the left cosets of $W_I$, $\maxrep{I}$, is given by  
  \begin{equation}
    \label{eq:maxrep1}
    \maxrep{I}= \{w\in W: w_I = \longest{I}\}=\{w\in
    W: D_R(w_I)=I\}. 
  \end{equation}
  Furthermore, we have $D_R(w_I)= D_R(w)\cap I$ for any $w\in W$, so we also have
\begin{equation}
\label{eq:maxrep2}
\maxrep{I}= \{w\in W: I\subseteq D_R(w)\}.
\end{equation}
\end{itemize}
\noindent
Let $R=\FF_2[W]$ be the
group algebra of $W$ over $\FF_2$. Every element of $R$ has the form $t_X=\sum_{w\in X}w$ for some $X\subseteq W$, in which case we will call $X$ the \emph{support} of $t_X$ and identify $t_X$ with the indicator function $\1_X$ on $W$. 
We write 
    \[
\parasum{I}:=t_{W_I}=\sum_{w\in W_I}w\in R 
   \] 
for each $I\subseteq S$, with $W_\varnothing=\{1_W\}$.
It
follows from \eqref{def: CC} that we can view Coxeter codes as sums of suitable cyclic $R$-submodules of the form $Rb_I, I\subseteq S$. 
More precisely, for each $r\in \{-1, 0, \cdots, m\},$ we have 
  \[
  \Coxeter{r}=\sum_{I\subseteq S, \abs{I}=m-r} R\,\parasum{I}.
  \]
Elements of the form $wb_I$ will play an important role in this paper. 

\begin{lemma}\label{lem: intersection}\phantom{lnebreak}\\[-.2in]
\begin{enumerate}
\item[{\rm(a)}] We have $wb_I=w'b_I$ if and only if $wW_I=w'W_I$.
\item[{\rm(b)}] If $I\cap J\ne\varnothing$, then $\parasum{I}\parasum{J}=0$ in $\FF_2[W]$.
\end{enumerate}
\end{lemma}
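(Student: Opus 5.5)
For part (a), I will identify $wb_I=\sum_{u\in W_I}wu=t_{wW_I}$, the indicator of the left coset $wW_I$. Under the identification of elements of $R$ with subsets of $W$ described above, two elements $t_X$ and $t_Y$ are equal exactly when $X=Y$, because the elements of $W$ form an $\FF_2$-basis of $R$. Hence $wb_I=w'b_I$ holds if and only if $wW_I=w'W_I$, and no further work is needed.

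For part (b), I will pick $s\in I\cap J$ and use the fact that $W_I$ decomposes into right cosets of $\langle s\rangle=\{1,s\}$. Concretely, I will take the minimal-length representatives $X\subseteq W_I$ of the right cosets $\langle s\rangle\backslash W_I$, or equivalently just any set of right coset representatives, so that $W_I=\bigsqcup_{x\in X}\{x,sx\}$. By symmetry I could instead factor on the other side, and that is the version I will actually use. I will write $W_I=\bigsqcup_{y\in Y} y\{1,s\}$ with $Y$ a set of left coset representatives of $\langle s\rangle$ in $W_I$, which is possible since $s\in I$ gives $\langle s\rangle\le W_I$. This yields the factorization $b_I=t_Y(1+s)$ in $R$.

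Similarly, since $s\in J$, I will write $W_J=\bigsqcup_{z\in Z}\{1,s\}z$ using right cosets, so that $b_J=(1+s)t_Z$. Then
\[
b_Ib_J=t_Y(1+s)(1+s)t_Z=t_Y(1+2s+s^2)t_Z=t_Y\cdot 2(1+s)\,t_Z=0,
\]
because $s^2=1$ and the characteristic is $2$.

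I don't expect any real obstacle. The only point to check is that the coset decompositions are disjoint, which holds because cosets of a subgroup partition the group. This ensures that the sums $t_Y(1+s)$ and $(1+s)t_Z$ reproduce each element of $W_I$, respectively $W_J$, exactly once.
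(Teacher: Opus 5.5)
Your proof is correct. Part (a) is the paper's argument: the support of $xb_I$ is $xW_I$. For part (b) you take a different, more elementary route.

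The paper expands $b_Ib_J=\sum_{x\in W_IW_J}a_x x$. It observes that each $a_x$ counts the pairs $(u,v)\in W_I\times W_J$ with $uv=x$, so $a_x=|W_I\cap W_J|$. It then uses $W_I\cap W_J=W_{I\cap J}$ from (C4) to conclude that this order is even.

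You instead pick $s\in I\cap J$ and factor $b_I=t_Y(1+s)$ and $b_J=(1+s)t_Z$, using left and right cosets of $\{1,s\}$ respectively. The product then contains $(1+s)^2=0$.

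Your argument needs only that a single element of order two lies in both subgroups. It bypasses both the fiber count and (C4). The paper's computation gives more: the full formula $b_Ib_J=|W_{I\cap J}|\,t_{W_IW_J}$. In particular, when $I\cap J=\varnothing$ the product is the indicator of $W_IW_J$, which is the kind of expression manipulated later, e.g. $wb_Ib_{I^c}=\sum_{u\in wW_I,\,y\in W_{I^c}}uy$ in the proof of Proposition~\ref{prop:independence}.

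One presentational point: the sides matter. The right-coset decomposition of $W_I$ in your opening sentence would give $b_I=(1+s)t_X$, which does not cancel directly against $b_J$. Since you ended up using the correct sides, that false start can simply be deleted.
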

\begin{proof}
Part (a) holds because the support of $xb_I$ equals the coset $xW_I$ for all $x\in W$. Part (b) follows from \eqref{cond: intersect} as follows.
We have 
  \[
    b_Ib_J= \sum_{u\in W_I}u \sum_{v\in W_J} v = \sum_{x\in W_IW_J} a_x x,
  \]
  where  $a_x$ is the number of pairs $(u,v)$ where $u\in W_I, v\in W_J$, and
  $uv=x$ for every $x\in W_IW_J$. 
  It is a basic fact from group theory that this
  number is precisely the order of the group
  ${W_I\cap W_J}$. We have $W_I\cap W_J=W_{I\cap J}$ by \eqref{cond: intersect}, which has even order if $I\cap J\neq \varnothing$, so $a_x=0$ for all
  $x\in W$ and $b_Ib_J=0$ whenever $I\cap J\neq \varnothing$. 
\end{proof}

  We recall from \cite[Theorem 3.6, Theorem 4.1]{coble2026coxeter} that the set 
\[\mathcal{B}_{m-r} := \{wb_J: J=S\setminus D_R(w) \text{\ and\ } \abs{J}\ge m-r\}\]
forms a basis of $\Coxeter{r}$, which also implies the expression for $\dim\Coxeter{r}$ in \eqref{eq:dimension}.

We will be using the following related but different basis of $\Coxeter{r}$ in this paper. 
\begin{proposition}
\label{prop:basis}
For every $r\in \{0,\dots, m\}$, 
the set 
\begin{align*}
\mcC_{m-r}&:=\{wb_J: J=D_R(w), |J|\ge m-r\}
\end{align*}
is a basis of $\Coxeter{r}$. 

\end{proposition}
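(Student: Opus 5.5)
The plan is to show three things: $\mcC_{m-r}$ lies in $\Coxeter{r}$, it is linearly independent, and its cardinality equals $\dim\Coxeter{r}$ as given in \eqref{eq:dimension}.

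\emph{Containment.} Let $wb_J\in\mcC_{m-r}$, so $J=D_R(w)$ and $|J|\ge m-r$. Choose any $J'\subseteq J$ with $|J'|=m-r$. By \eqref{cond: factor} applied inside the Coxeter system $(W_J,J)$ (legitimate by \eqref{cond: intersect}), every element of $W_J$ factors uniquely as $uv$ with $u\in W_J^{J'}$ and $v\in W_{J'}$. Hence $b_J=t_{W_J^{J'}}\,b_{J'}$ in $R$. It follows that $wb_J=(w\,t_{W_J^{J'}})\,b_{J'}\in Rb_{J'}\subseteq\Coxeter{r}$.

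\emph{Counting.} The elements $wb_{D_R(w)}$ are pairwise distinct for distinct $w$; this follows from the triangularity argument below, since each such element has $w$ as its unique longest support element. Therefore
\[
|\mcC_{m-r}|=\#\{w\in W:|D_R(w)|\ge m-r\}=\sum_{i\ge m-r}\euler{W}{i}.
\]
By \eqref{cond: longest}, the map $w\mapsto w_0w$ is a bijection of $W$ satisfying $|D_R(w_0w)|=m-|D_R(w)|$. This gives $\euler{W}{i}=\euler{W}{m-i}$, so the sum above equals $\sum_{i=0}^r\euler{W}{i}=\dim\Coxeter{r}$.

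\emph{Linear independence.} This is the main step, and I would prove it by a unitriangularity argument with respect to length. For $wb_J\in\mcC_{m-r}$, the support of $wb_J$ is the coset $wW_J$ by Lemma~\ref{lem: intersection}(a). Since $J=D_R(w)$, \eqref{eq:maxrep2} shows that $w\in\maxrep{J}$. Then \eqref{cond: coset_rep} says $w$ is the unique element of maximal length in $wW_J$: every other element of the coset has strictly smaller length than $w$.

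Now suppose a nonempty subset of $\mcC_{m-r}$ sums to $0$. Pick an index $w$ of maximal length among the elements appearing. For any other appearing $w'\ne w$, each element of $w'W_{D_R(w')}$ has length at most $\ell(w')\le\ell(w)$. Equality of lengths can only occur at the element $w'$ itself, which is not $w$. So $w$ lies in no other support, and its coefficient in the sum is $1\neq0$, a contradiction.

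Combining containment, independence and the cardinality count, $\mcC_{m-r}$ is a basis of $\Coxeter{r}$. The only delicate point is making sure the coset decomposition $b_J=t_{W_J^{J'}}b_{J'}$ is justified; this rests on \eqref{cond: factor} for the parabolic subsystem $(W_J,J)$.
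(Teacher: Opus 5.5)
Your proof is correct, but it takes a different route from the paper's. The paper checks neither containment nor independence. It takes the Coble--Barg basis $\mathcal{B}_{m-r}=\{wb_J: J=S\setminus D_R(w),\ |J|\ge m-r\}$ as known and notes that $\Coxeter{r}$, a sum of left ideals $Rb_I$, is stable under the involution $x\mapsto w_0x$. By $D_R(w_0w)=S\setminus D_R(w)$ from \eqref{cond: longest}, this involution sends $wb_J\in\mathcal{B}_{m-r}$ to $(w_0w)b_J\in\mcC_{m-r}$, so it carries one basis bijectively onto the other.

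You instead verify containment directly, and prove linear independence by unitriangularity: $w$ is the unique longest element of $wW_{D_R(w)}$. From the literature you import only the dimension formula \eqref{eq:dimension}, which you match to your count via Dehn--Sommerville.

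The paper's argument is essentially one line but needs the full cited basis theorem. Yours is longer but isolates what is actually nontrivial. Containment and independence are elementary, so the only external input is the upper bound $\dim\Coxeter{r}\le\sum_{i\le r}\euler{W}{i}$, i.e.\ the spanning content of the cited result. Your triangularity observation is also the same mechanism the paper relies on later, in Proposition \ref{prop:independence} and Lemma \ref{lem: votes}.

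One minor remark: the step you flag as delicate, $b_J=t_{W_J^{J'}}b_{J'}$, needs no Coxeter theory. Any transversal of the left cosets of the subgroup $W_{J'}$ in $W_J$ gives the same identity.
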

\begin{proof}
    This fact has been mentioned without proof in \cite[Sec.3.1]{coble2026coxeter}, so we provide a short proof here. Consider the action of the longest element $w_0\in W$ on $\Coxeter{r}$. Recall that $w_0^2=1_W$ and $D_R(w_0w)=S\setminus D_R(w)$ for all $w\in W$ by \eqref{cond: longest}, so this action is an involutive linear map on $\Coxeter{r}$ that interchanges $\mathcal{B}_{m-r}$ and $\mathcal{C}_{m-r}$. Since $\mathcal{B}_{m-r}$ is a basis, it follows that so is $\mathcal{C}_{m-r}$.
\end{proof}

\begin{remark}
\label{rmk:symm}
The fact that $D_R(w_0w)=S\setminus D_R(w)$ for all $w\in W$ also implies that left multiplication by $w_0$ on $W$ interchanges the sets $\euler{W}{J}$ and $\euler{W}{J^c}$ for any $J$, which implies that 
\begin{equation}
    \label{eq:ds}
    \euler{W}{i}=\euler{W}{m-i}
\end{equation} 
for any $0\le i\le m$. Eq.~\eqref{eq:ds} is known as the \emph{Dehn--Sommerville relation}.
\end{remark}


\section{Shadow codes}
\label{sec:shadow}
For $K\subseteq S$, let
          \[
          M_K=\FF_2[W/W_K].
          \] 
Define the \emph{$K$-shadow projection} to be the $\FF_2$-linear map given by  
  \[
    \pi_K: R\rightarrow M_K, \quad w\mapsto  wW_K, 
  \]
and define the  \emph{$K$-lift} to be the  $\FF_2$-linear map given by 
  \[
  \iota_K:  M_K\rightarrow R, \quad wW_K\mapsto wb_K. 
\]
Note that $\iota_K$ is well defined by Lemma \ref{lem: intersection}(a). Note also
that $M_K$ has an $R$-module structure induced by the natural action of $W$ on
the left cosets of $W_K$, and that the shadow projection $\pi_{K^c}$ is
$W$-equivariant and is thus an $R$-module homomorphism. 

\begin{definition}
  \label{def:shadow}
For every $I\subseteq S$, we define the \emph{shadow code of type $I$} to be the image
\[
D_I=\pi_{\icomp}(R\parasum{I}),
\]
viewed as a linear code in the vector space $M_{\icomp}$, and we set 
 \[
 \mcB_I:=\{\pi_{\icomp}(wb_I): D_R(w)=I\}.
 \]
For any $r\in \{-1, 0,\dots, m\}$, denote
   $$
\parak{m-r}:=\{I\subseteq S: \abs{I}=m-r\}.
   $$
   Further, let
$\phi_r$ be the map
\begin{equation}\label{eq: phi map}
      \phi_r:\Coxeter{r}\to\bigoplus_{I\in \parak{m-r}} D_I, \quad 
    c\mapsto \left(\pi_{\icomp}(c)\right)_{I}.
\end{equation}
\end{definition}

The main goal of this section is to prove that for every $0\le r\le m$, the natural embedding
$\Coxeter{r-1}\hookrightarrow\Coxeter{r}$ followed by $\phi_r$ forms a short exact sequence. This is the
content of Theorem \ref{thm:ses}.  We will also show that $\mcB_I$ is a basis of the code $D_I$ for all $I\subseteq S$. 

\begin{proposition}
\label{prop:pi}
Let $K\subseteq S$ and $x\in R$. Let $r=\abs{K}$.
  \begin{enumerate}
        \item[\upshape(a)] If $x=\sum_{w\in W} a_ww\in R$, then
  \[
\pi_K(x)=\sum_{\Omega\in W/W_K}\Big(\sum_{w\in \Omega}a_w\Big) \Omega.
  \]
\item[\upshape(b)] The $K$-shadow projection weakly decreases weight: $\wt(\pi_{K}(x))\le \wt(x)$.
    \item[\upshape(c)] We have $\iota_K\pi_K(x)=xb_K$, and $\pi_K(x)=0$ if and only if
      $xb_K=0$.  
\item[\upshape(d)] For any $w\in W$ and any $J\subseteq S$ such that $J\cap K\neq
  \varnothing$, we have $\pi_K(wb_J)=0$.
  \item[\upshape(e)] We have $\pi_{K}(\Coxeter{r-1})=0$.
    \end{enumerate}
\end{proposition}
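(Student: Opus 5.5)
Parts (a)--(c) follow directly from the definitions, (d) follows from (c) and Lemma~\ref{lem: intersection}(b), and (e) follows from (d) by a counting argument on the sizes of subsets of $S$. I would prove them in that order.

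For (a), $\pi_K$ is linear and sends $w$ to its coset $wW_K$. Grouping the terms of $x=\sum_w a_w w$ by the coset containing $w$ gives the formula. For (b), the formula in (a) says the coefficient of $\Omega$ in $\pi_K(x)$ is $\sum_{w\in\Omega}a_w$. This can be nonzero only if some $w\in\Omega$ has $a_w\neq 0$. So the number of cosets in the support of $\pi_K(x)$ is at most the number of cosets meeting $\supp(x)$, which is at most $\abs{\supp(x)}=\wt(x)$.

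For (c), by linearity it suffices to check basis elements: $\iota_K\pi_K(w)=\iota_K(wW_K)=wb_K$. Hence $\iota_K\pi_K(x)=xb_K$. It remains to show that $\iota_K$ is injective. The distinct cosets $\Omega$ map to the elements $wb_K=\1_\Omega$, whose supports are the cosets themselves. These supports are pairwise disjoint and nonempty, so the images are linearly independent. Therefore $\pi_K(x)=0$ if and only if $\iota_K\pi_K(x)=0$, that is, if and only if $xb_K=0$.

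For (d), part (c) gives that $\pi_K(wb_J)=0$ if and only if $wb_Jb_K=0$. Lemma~\ref{lem: intersection}(b) gives $b_Jb_K=0$ when $J\cap K\neq\varnothing$, so the claim follows.

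For (e), recall that $\Coxeter{r-1}$ is spanned by the elements $wb_J$ with $\abs{J}=m-(r-1)=m-r+1$. Since $\abs{K}=r$, we have $\abs{J}+\abs{K}=m+1>m=\abs{S}$. By pigeonhole, $J\cap K\neq\varnothing$, so (d) kills every spanning element. By linearity, $\pi_K(\Coxeter{r-1})=0$.

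For $r=0$ we have $\Coxeter{-1}=\{0\}$, so (e) is trivial in that case. The only step needing care is the injectivity of $\iota_K$ in (c); the rest is routine.
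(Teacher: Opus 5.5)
Your proposal is correct and follows the paper's proof essentially step by step: (c) uses injectivity of $\iota_K$ via disjoint supports, and (d) reduces to Lemma~\ref{lem: intersection}(b) through (c). The only difference is in (e), where you apply the pigeonhole argument directly to the defining spanning set $\{wb_J : |J| = m-r+1\}$ of $\Coxeter{r-1}$, while the paper applies it to the basis $\mathcal{C}_{m-(r-1)}$ from Proposition~\ref{prop:basis}; your version is slightly more elementary because it does not need that basis result.
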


\begin{proof}
If $x=\sum_{w\in W}a_ww$, then we have $\pi_K(x)=\sum_{w\in W}a_wwW_K$, so each coset $\Omega=uW_K$ of $W_K$ appears with coefficient 
\[\sum_{w\in W: wW_K=uW_K}a_w=\sum_{w\in \Omega}a_w.\]
Part (a) follows. It also follows that every coset in the support of $\pi_K(x)$ arises from at least one element in the support of $x$, so (b) holds as well.

  We have $\iota_K\pi_K(w)=wb_K$ for any $w\in W$ by the definition of
  $\iota_K$ and $\pi_K$, so it follows 
  from linearity that $\iota_K\pi_K(x)=xb_K$ for any $x\in R$. 

  Since distinct left cosets of $W_K$ have disjoint support, the map $\iota_K$
  is injective, so we have $\pi_K(x)=0$ if and only if $\iota_K\pi_K(x)=0$,
  which occurs if and only if $xb_K=0$ by the preceding statement. This completes the
  proof of (c).  

  If $J\cap K\neq \varnothing$, then $(wb_J)b_K=w(b_Jb_K)=0$ by Lemma \ref{lem: intersection}(b), so we have $\pi_K(wb_J)=0$ by (c). This proves (d).
   
  By Proposition \ref{prop:basis}, the set $\mathcal{C}_{m-(r-1)}=\{wb_J:J=D_R(w), \abs{J}\ge m-r+1\}$ is a basis of $\Coxeter{r-1}$. For every set $J\subseteq S$ with $\abs{J}\ge m-r+1$, we have $\abs{K}+\abs{J}=m+1>\abs{S}$ and thus $K\cap J\neq \varnothing$, so every element of $\mathcal{C}_{m-(r-1)}$ vanishes under $\pi_K$ by (d). It follows that 
  $\pi_{K}(\Coxeter{r-1})=0$, which proves (e).
\end{proof}

\begin{corollary}
     \label{cor:surj}
    The map $\phi_r$ is surjective.
\end{corollary}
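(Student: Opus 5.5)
We need to show that $\phi_r:\Coxeter{r}\to\bigoplus_{I\in\parak{m-r}}D_I$ is onto. Since $D_I=\pi_{\icomp}(R\parasum{I})$ is spanned by the vectors $\pi_{\icomp}(w\parasum{I})$ with $w\in W$, it suffices to show that for each fixed $I\in\parak{m-r}$ and each $w\in W$, the tuple whose $I$-component is $\pi_{\icomp}(w\parasum{I})$ and whose other components are $0$ lies in the image of $\phi_r$. The natural preimage is the codeword $c=w\parasum{I}\in\Coxeter{r}$, which lies in $\Coxeter{r}$ because $\abs{I}=m-r$.

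By definition, the $I$-component of $\phi_r(c)$ is $\pi_{\icomp}(w\parasum{I})$, as required. Now let $I'\in\parak{m-r}$ with $I'\neq I$. The $I'$-component of $\phi_r(c)$ is $\pi_{I'^c}(w\parasum{I})$. Because $\abs{I}=\abs{I'}$ and $I\neq I'$, the set $I$ is not contained in $I'$. Hence $I\cap I'^c\neq\varnothing$, and Proposition \ref{prop:pi}(d), applied with $K=I'^c$ and $J=I$, gives $\pi_{I'^c}(w\parasum{I})=0$. Therefore $\phi_r(w\parasum{I})$ equals the tuple with $\pi_{\icomp}(w\parasum{I})$ in position $I$ and $0$ elsewhere.

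These tuples, as $I$ ranges over $\parak{m-r}$ and $w$ over $W$, span $\bigoplus_I D_I$. Since $\phi_r$ is linear, its image contains this span, so $\phi_r$ is surjective. The only step with any content is the observation that two distinct subsets of equal size satisfy $I\not\subseteq I'$, which is immediate. For $r=-1$ or $r=m$ the argument degenerates consistently: for $r=m$ there is only the single index $I=\varnothing$, and there is no cross term to check.
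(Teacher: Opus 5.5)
Your proof is correct and is essentially the paper's argument: both use Proposition \ref{prop:pi}(d) together with the fact that distinct subsets $I\neq I'$ of equal size satisfy $I\cap I'^c\neq\varnothing$, so each summand $Rb_I$ contributes only to its own $I$-component. The only difference is bookkeeping. The paper lifts an arbitrary tuple $(z_I)_I$ to elements $c_I\in Rb_I$ and sums them, whereas you verify the spanning tuples coming from $wb_I$ and then invoke linearity.
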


\begin{proof}
Let $z=(z_I)_{I}\in \bigoplus_{I\in \parak{m-r}}D_I$. By the
definition of $D_I$, there exists $c_I\in Rb_I$ such that $z_I=\pi_{\icomp}(c_I)$
for each $I\in \parak{m-r}$. Set $c=\sum_{J\in \parak{m-r}} c_J$. For any two
distinct sets $I,J\in \parak{m-r}$, we have $\icomp\cap J\neq\varnothing$, so Proposition \ref{prop:pi}(d) implies  $\pi_{\icomp}(c_J)=0$. This
further implies that the $I$-component of $\phi_r(c)$ is 
    \[
      \phi_I(c):=\pi_{\icomp}\Big(\sum_{J\in \parak{m-r}} c_J\Big)=\pi_{\icomp}(c_I)=z_I
    \]
    for every $I\in \parak{m-r}$, so $\phi_r(c)=z$. It follows that $\phi_r$ is
    surjective, as desired.
\end{proof}

The next result proves the linear independence of the set $\mcB_I$. We will use it to help prove 
$\ker\phi_r=\Coxeter{r-1}$, which we will then use to help prove that $\mcB_I$ is in fact a basis of $D_I$.
\begin{proposition}
\label{prop:independence}
 For any $I\subseteq S$, the map
 \[f_I: \euler{W}{I} \to \mcB_I, \quad w\mapsto\pi_{\icomp}(wb_I)\] is a bijection. 
 The elements of the set $\mcB_I$ are linearly independent over $\FF_2$.
\end{proposition}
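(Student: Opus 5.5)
The plan is to compute $\pi_{\icomp}(wb_I)$ explicitly via Proposition \ref{prop:pi}(a), and then use a triangularity argument with respect to length (or Bruhat order) to get both injectivity of $f_I$ and linear independence. Surjectivity of $f_I$ holds by the definition of $\mcB_I$, so for the bijection it suffices to prove injectivity, and injectivity follows from linear independence together with nonvanishing. I will therefore prove that the family $\{\pi_{\icomp}(wb_I)\}_{w\in\euler{W}{I}}$ is linearly independent as an indexed family; this gives both claims at once.

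Fix $w$ with $D_R(w)=I$. Since $I\subseteq D_R(w)$, equation \eqref{eq:maxrep2} shows that $w$ is the maximal-length representative of $wW_I$, so $w=w^I\longest{I}$. The support of $wb_I$ is $wW_I$. By Proposition \ref{prop:pi}(a), the coefficient of a coset $\Omega\in W/W_{\icomp}$ in $\pi_{\icomp}(wb_I)$ is $\abs{\Omega\cap wW_I}$ mod 2. Now $\Omega\cap wW_I$ is either empty or a coset of $W_I\cap W_{\icomp}=W_\varnothing$ by \eqref{cond: intersect}, so it has at most one element. Hence $\pi_{\icomp}(wb_I)=\sum_{u\in wW_I}uW_{\icomp}$, and distinct $u\in wW_I$ lie in distinct $W_{\icomp}$-cosets. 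In particular $\pi_{\icomp}(wb_I)$ contains the coset $wW_{\icomp}$ with coefficient $1$.

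Next we choose a distinguished coset for each $w$. Since $D_R(w)=I$, we have $D_R(w)\cap \icomp=\varnothing$, so $w\in W^{\icomp}$, and $w$ is the minimal-length representative of $wW_{\icomp}$ by \eqref{cond: coset_rep}. I claim that $wW_{\icomp}$ does not occur in $\pi_{\icomp}(w'b_I)$ for any $w'\in\euler{W}{I}$ with $\ell(w')\le\ell(w)$ and $w'\ne w$. Indeed, if it occurred, some $u\in w'W_I$ would satisfy $uW_{\icomp}=wW_{\icomp}$. Minimality of $w$ in its $W_{\icomp}$-coset gives $\ell(u)\ge\ell(w)$, and maximality of $w'$ in its $W_I$-coset gives $\ell(u)\le\ell(w')\le\ell(w)$. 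Thus $\ell(u)=\ell(w)$, and by uniqueness of the minimal representative in \eqref{cond: coset_rep}, $u=w$. Then $w\in w'W_I$, and both $w$ and $w'$ are the unique maximal representative of that coset, so $w=w'$, a contradiction.

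With this in hand, the argument is a standard triangularity step. Suppose $\sum_{w\in A}\pi_{\icomp}(wb_I)=0$ for some nonempty $A\subseteq\euler{W}{I}$. Pick $w\in A$ of maximal length. Then the coset $wW_{\icomp}$ appears in exactly one summand, namely the one indexed by $w$, which is a contradiction. Hence the family is linearly independent; in particular its members are nonzero and pairwise distinct, so $f_I$ is injective and therefore a bijection onto $\mcB_I$.

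The main obstacle is the distinguished-coset step in the third paragraph, specifically combining the min/max representative properties for the two different parabolics $W_I$ and $W_{\icomp}$. The rest is bookkeeping.
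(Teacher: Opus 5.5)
Your proof is correct and is essentially the paper's argument: a length-triangularity (leading-term) argument in which the coset $wW_{\icomp}$ is the leading term of $\pi_{\icomp}(wb_I)$, with injectivity and independence obtained by peeling off an element of maximal length. The only difference is how the leading-term property is checked. The paper lifts via $\iota_{\icomp}$ and shows that $w\longest{\icomp}$ is the unique longest element of $\supp(wb_Ib_{\icomp})$. You argue directly in $M_{\icomp}$, using that $w$ is the minimal representative of $wW_{\icomp}$ and $w'$ is the maximal representative of $w'W_I$; this is the same reasoning the paper later uses for the triangularity of $N(w,w')$ in Lemma~\ref{lem: votes}.
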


\begin{proof}
Let $w\in \euler{W}{I}$, so that $D_R(w)=I$.
Proposition \ref{prop:pi}(c) implies that 
\[
 \iota_{\icomp}(f_I(w))= \iota_{\icomp}\pi_{\icomp}(wb_I)=wb_Ib_{\icomp}=\sum_{u\in wW_I, y\in
  W_{\icomp}}uy\in R.
\]
Set $z=w_0^{\icomp}$, the unique longest element in the parabolic subgroup $W_{\icomp}$.
Since $D_R(w)=I$, it follows from \eqref{cond: factor} that the coset factorization of the element $wz$ with respect to $\icomp$ is simply
$wz=w\cdot z$, with $\ell(wz)=\ell(w)+\ell(z)$, and it follows from \eqref{cond: coset_rep} that $w$ is the unique maximal-length element in the coset $wW_I$. Thus, for any pair of
elements $u\in wW_I$ and $y\in W_{\icomp}$, we have 
\[ \ell(uy)\le \ell(u)+\ell(y)\le \ell(w)+\ell(z)=\ell(wz), 
\] 
where $\ell(uy)=\ell(wz)$ if and only if $u=w$ and
$y=z$. This implies that the element $w':=wz$ is
the unique longest element in the support of $\iota_{\icomp}(f_I(w))$.
Thus, we may recover $w$ from $f_I(w)$ by taking the
$\icomp$-lift, finding the unique longest element $w'$ in the support of
the $\icomp$-lift, and then computing $w$ as $w=w'z^{-1}$. This implies $f_I$ is injective.

We also have $\im f_I=\mcB_I$ by the definition of $f_I$ and $\mcB_I$, so $f_I$ is a bijection. 

It remains to show that $\mcB_I=\im f_I$ is linearly independent. Suppose $\sum_{j=1}^k
a_j f_I(w_j)=0$ for some scalars $a_1, \dots, a_k\in \FF_2$ and distinct elements $w_1, ..., w_k\in \euler{W}{I}$. 
By the last paragraph, if $w_i$ is any element in
$w_1, \dots, w_k$ with maximal length, then
in the $\icomp$-lift
\[
  \iota_{\icomp} \Big(\sum_{j=1}^k a_j f_I(w_j)\Big)
  =\sum_{j=1}^ka_j \left(\iota_{\icomp} f_I(w_j)\right)\in R
\]
the element $w_iz$ is a maximal-length element that appears with coefficient $a_i$, so $a_i=0$. It follows
by induction on $k$ that $a_j=0$ for all $1\le j\le k$, so $\mathcal{B}_I$ is
linearly independent, as desired.
\end{proof}

It is important to note that Theorem \ref{thm:ses} below allows us to prove the main distance result using induction. Indeed, the fact that there is a short exact sequence as in the theorem means that a codeword $c\in C_W(r)$ either lies in $C_W(r-1)$ or projects non-trivially onto the direct sum, where we will be able to have precise control over the distances of the shadow codes $D_I$ (Proposition \ref{prop:14}). We will use the distances of these shadow codes and the fact that shadow projections do not increase weight to prove Theorem \ref{thm: main}.

\begin{theorem}
\label{thm:ses}
For every $r\in \{0,\dots, m\}$, we have a short exact sequence of $R$-modules given by
\begin{equation}\label{eq:ses}
  0\longrightarrow \Coxeter{r-1}\xhookrightarrow{\psi_{r-1}}
  \Coxeter{r}\xlongrightarrow{\phi_r}
  \bigoplus_{I\in \parak{m-r}} D_I\longrightarrow 0,
\end{equation}
where $\psi_{r-1}$ is the natural embedding of $\Coxeter{r-1}$ into $\Coxeter{r}$.
\end{theorem}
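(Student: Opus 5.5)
Exactness at the left end is immediate, since $\psi_{r-1}$ is an inclusion. Exactness at the right end is Corollary~\ref{cor:surj}. The inclusion $\Coxeter{r-1}\subseteq\ker\phi_r$ is Proposition~\ref{prop:pi}(e), applied with $K=\icomp$ for each $I\in\parak{m-r}$; note that $|\icomp|=r$. All maps are $R$-module homomorphisms, because $\pi_{\icomp}$ is $W$-equivariant. What remains is the reverse inclusion $\ker\phi_r\subseteq\Coxeter{r-1}$. I plan to get it by a dimension count rather than by a direct argument.

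From \eqref{eq:dimension}, $\dim\Coxeter{r}-\dim\Coxeter{r-1}=\euler{W}{r}$. Since $\phi_r$ is surjective,
\[
\dim\ker\phi_r=\dim\Coxeter{r}-\sum_{I\in\parak{m-r}}\dim D_I .
\]
It therefore suffices to prove
\[
\sum_{I\in\parak{m-r}}\dim D_I\ \ge\ \euler{W}{r}.
\]
Then $\dim\ker\phi_r\le\dim\Coxeter{r-1}$, and together with $\Coxeter{r-1}\subseteq\ker\phi_r$ this gives equality.

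For the lower bound, each $D_I$ contains the set $\mcB_I$. By Proposition~\ref{prop:independence}, $\mcB_I$ is linearly independent and in bijection with $\euler{W}{I}$, so $\dim D_I\ge|\euler{W}{I}|$. Summing over $|I|=m-r$ gives
\[
\sum_{I}|\euler{W}{I}|=\euler{W}{m-r}=\euler{W}{r}
\]
by the Dehn--Sommerville relation \eqref{eq:ds}. This completes exactness.

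As a byproduct, every inequality in the count must be an equality, so $\dim D_I=|\euler{W}{I}|$ and $\mcB_I$ is a basis of $D_I$, as the paper anticipates. The one point I would double-check is that each element $\pi_{\icomp}(wb_I)$ with $D_R(w)=I$ actually lies in $D_I$. This holds because $wb_I\in Rb_I$. With that confirmed, the rest is the dimension bookkeeping, and the main step is just arranging the inequalities in the right direction.
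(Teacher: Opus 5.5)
Your proof is correct, but it gets the key inclusion $\ker\phi_r\subseteq\Coxeter{r-1}$ by a different route from the paper.

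The paper argues element by element. It expands $c\in\ker\phi_r$ in the descent basis $\mcC_{m-r}$ of Proposition~\ref{prop:basis}, and notes that the terms with $|D_R(w)|>m-r$ already lie in $\Coxeter{r-1}$. Then, for each $I\in\parak{m-r}$, it uses Proposition~\ref{prop:pi}(d) to kill the contributions from $\euler{W}{J}$ with $J\neq I$, since $\icomp\cap J\neq\varnothing$. What remains of the $I$-component is a combination of $\mcB_I$ with coefficients $a_w$, $w\in\euler{W}{I}$, and the linear independence of $\mcB_I$ forces these coefficients to vanish.

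Your dimension count is exactly the computation the paper runs afterwards to prove Theorem~\ref{thm:shadow_basis}, used in the opposite logical order. You get exactness and the fact that $\mcB_I$ is a basis of $D_I$ at once, which is shorter and involves no circularity. The cost is that you need the numerical formula \eqref{eq:dimension} and the Dehn--Sommerville relation \eqref{eq:ds}. The paper's direct argument needs only that $\mcC_{m-r}$ spans $\Coxeter{r}$, and uses no counting.

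In exchange, the direct argument shows explicitly that the $I$-component of $\phi_r(c)$ depends only on the top-layer coefficients $a_w$ with $D_R(w)=I$. This is the layer-isolation mechanism behind \eqref{eq:isolate}, on which the decoder of Section~\ref{sec:decoding} relies; a dimension match alone does not exhibit it.
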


\begin{proof}
The map $\psi_{r-1}$ is an injective $R$-module homomorphism because it is a natural embedding. The map $\phi_r$ is an $R$-module homomorphism because we have noted that $\phi_I$ is an $R$-module homomorphism for every $I\subseteq S$, and $\phi_r$ is surjective by Corollary \ref{cor:surj}. 
It remains to show that 
$\im \psi_{r-1}=\ker \phi_{r}$. In other words, it suffices to show
    that $\Coxeter{r-1}=\ker \phi_r$.  
    Proposition \ref{prop:pi}(d) implies that $\pi_{\icomp}(\Coxeter{r-1})=0$
    for all $I\in \parak{m-r}$, so we have $\Coxeter{r-1}\subseteq\ker \phi_r$
    and it further suffices to show that $\ker\phi_r\subseteq \Coxeter{r-1}$. 

Let $c\in \ker \phi_r\subseteq \Coxeter{r}$. Expand $c$ into the basis $\mcC_{m-r}$ of $\Coxeter{r}$ as
  \[
  c=\sum_{w:\abs{D_R(w)}\ge m-r} a_w e_w,
  \]
where $e_w=wb_{D_R(w)}$ for every $w$. If $\abs{D_R(w)}>m-r$, then $e_w$ lies in the basis $\mcC_{m-(r-1)}$ of $\Coxeter{r-1}$, so $\phi_r(e_w)=0$ since $\Coxeter{r-1}\subseteq \ker \phi_r$.
It follows that
\[
0=\phi_r(c)=\sum_{w:\abs{D_R(w)=m-r}}\phi_r(a_we_w).\]
Now fix a set $I\subseteq S$ with $\abs{I}=m-r$. Taking the $I$-component of the above sum yields
\[
0=\sum_{w:\abs{D_R(w)}=m-r}a_w\pi_{\icomp}(e_w)=\sum_{J\in \parak{m-r}}\sum_{w\in \euler{W}{J}}a_w\pi_{\icomp}(e_w).
\]
For any $J\in \parak{m-r}\setminus\{I\}$, we have $\icomp\cap J\neq \varnothing$, and therefore $\pi_{\icomp}(e_w)=0$ for all $w\in \euler{W}{J}$ by Proposition \ref{prop:pi}(d). It then follows that 
\[
\sum_{w:w\in \euler{W}{I}}a_w\pi_{\icomp}(e_w)=0. 
\]
The set $\{\pi_{\icomp}e_w:w\in \euler{W}{I}\}$ is precisely the set $\mcB_I$, so 
Proposition \ref{prop:independence} implies that $a_w=0$ for all $w\in \euler{W}{I}$. Since $I$ is arbitrary, it follows that $a_w=0$ for all $w\in W$ with $\abs{D_R(w)}=m-r$, so $c$ is in the span of the basis $\mcC_{m-(r-1)}$ of $\Coxeter{r-1}$. This proves $\ker \phi_r\subseteq\Coxeter{r-1}$, as desired.
\end{proof}
The next theorem is not needed to show our main result and is included to provide an additional insight into the properties of shadow codes.
\begin{theorem}
\label{thm:shadow_basis}
For any $I\subseteq S$,
the set $\mathcal{B}_I=\{\pi_{\icomp}(wb_I): D_R(w)=I\}$
is a basis of the shadow code $D_I$, and we have \[\dim D_I=\abs{\euler{W}{I}}.\]
\end{theorem}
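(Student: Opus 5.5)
Since Proposition \ref{prop:independence} already shows that $\mcB_I$ is linearly independent and that $|\mcB_I|=\abs{\euler{W}{I}}$, it remains to show that $\mcB_I$ spans $D_I$. Equivalently, I will show $\dim D_I\le\abs{\euler{W}{I}}$. The plan is a dimension count using the short exact sequence of Theorem \ref{thm:ses} together with the known dimension formula \eqref{eq:dimension}.

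First, for each $r\in\{0,\dots,m\}$, exactness of \eqref{eq:ses} gives
\[
\sum_{I\in\parak{m-r}}\dim D_I=\dim\Coxeter{r}-\dim\Coxeter{r-1}=\euler{W}{r},
\]
using \eqref{eq:dimension} (with $\dim\Coxeter{-1}=0$). The right-hand side is $\abs{\{w:\abs{D_R(w)}=r\}}$. The left-hand side is indexed by sets of size $m-r$, so I will rewrite it via the Dehn--Sommerville relation \eqref{eq:ds}:
\[
\euler{W}{r}=\euler{W}{m-r}=\sum_{I\in\parak{m-r}}\abs{\euler{W}{I}}.
\]
Independently of \eqref{eq:dimension}, the basis $\mcC_{m-r}$ of Proposition \ref{prop:basis} directly gives $\dim\Coxeter{r}-\dim\Coxeter{r-1}=\#\{w:\abs{D_R(w)}=m-r\}$, which avoids \eqref{eq:ds} altogether; I would use this cleaner version. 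Hence
\[
\sum_{I\in\parak{m-r}}\dim D_I=\sum_{I\in\parak{m-r}}\abs{\euler{W}{I}}.
\]

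Second, by Proposition \ref{prop:independence}, $\mcB_I\subseteq D_I$ is linearly independent of size $\abs{\euler{W}{I}}$. Here $\mcB_I\subseteq D_I$ holds because $wb_I\in Rb_I$. So $\dim D_I\ge\abs{\euler{W}{I}}$ for every $I$. Combining this termwise inequality with the equality of the sums over $I\in\parak{m-r}$ forces $\dim D_I=\abs{\euler{W}{I}}$ for each $I$ of size $m-r$. Since every $I\subseteq S$ has size $m-r$ for some $r\in\{0,\dots,m\}$, this covers all $I$. Then $\mcB_I$ is a linearly independent set of size $\dim D_I$, hence a basis.

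I do not expect a serious obstacle. The only point needing care is the bookkeeping in the dimension difference: one must check that the elements of $\mcC_{m-r}\setminus\mcC_{m-r+1}$ are exactly the $wb_{D_R(w)}$ with $\abs{D_R(w)}=m-r$, and that these are counted by $\sum_{\abs{I}=m-r}\abs{\euler{W}{I}}$. Both are immediate from the definitions.
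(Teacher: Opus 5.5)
Your proof is correct and is essentially the paper's argument: linear independence of each $\mcB_I$ (Proposition~\ref{prop:independence}) gives $\dim D_I\ge\abs{\euler{W}{I}}$, and exactness of \eqref{eq:ses} forces equality in the sum over $I\in\parak{m-r}$, hence termwise. The only difference is cosmetic. You compute $\dim\Coxeter{r}-\dim\Coxeter{r-1}$ directly from the nested bases $\mathcal{C}_{m-r}\supseteq\mathcal{C}_{m-r+1}$, whereas the paper uses \eqref{eq:dimension} together with the Dehn--Sommerville relation \eqref{eq:ds}.
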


\begin{proof}
Let $I\subseteq S$ and suppose $\abs{I}=m-r$ for some $0\le r\le m$. 
Proposition \ref{prop:independence} implies that $\abs{\mcB_I}=\abs{\euler{W}{I}}$, so it suffices to prove that $\mcB_I$ is a basis of $D_I$.

For all $J\in \parak{m-r}$, the set $\mathcal{B}_J$ is linearly independent by Proposition \ref{prop:independence}, so we have $\dim D_J\ge \abs{B_J}$, where equality holds if and only if $B_J$ is a basis for $D_J$. Since $\abs{B_J}=\euler{W}{J}$ by Proposition \ref{prop:independence}, it follows that 
\[
\dim\bigoplus_{J\in \parak{m-r}}D_J=\sum_{J\in \parak{m-r}}\dim D_J\ge \sum_{J\in \parak{m-r}} \abs{\mathcal{B}_J}=\sum_{J\in \parak{m-r}}\euler{W}{J}=\euler{W}{m-r},
\]
with $\dim \bigoplus_{J\in \parak{m-r}}D_J=\euler{W}{m-r}$ if and only if $\mathcal{B}_J$ is a basis of $D_J$ for all $J\in \parak{m-r}$. Thus, to prove the theorem it further suffices to show that $\dim \bigoplus_{J\in \parak{m-r}}D_J=\euler{W}{m-r}$. This follows from Theorem \ref{thm:ses} as follows: since the map $\phi_r: \Coxeter{r}\rightarrow\bigoplus_{I\in \parak{m-r}}D_I$ is surjective and has $\Coxeter{r-1}$ as its kernel by Theorem \ref{thm:ses}, we have 
   \begin{align*}
\dim\bigoplus_{J\in \parak{m-r}}D_J&=\dim\Coxeter{r}-\dim\Coxeter{r-1}\\[-.1in]
&=\sum_{i=0}^{r}\euler{W}{i}-\sum_{i=0}^{r-1}\euler{W}{i}\\
&=\euler{W}{r}=\euler{W}{m-r},
  \end{align*}
where the second and fourth equalities hold by Eqns.~\eqref{eq:dimension} and \eqref{eq:ds}, respectively.
\end{proof}


\section{Minimum distance of Coxeter codes}
In this section, we prove Theorem~\ref{thm: main}. 
 We will first show that $\dist(D_I) = \abs{W_I}$ for any $I\subseteq S$ (Proposition \ref{prop:14}), which we will then combine with Theorem \ref{thm:ses} to deduce Theorem \ref{thm: main}. 

\begin{definition}
\label{def:apartment}\phantom{Definition}\\[-.1in]
    \begin{enumerate}
        \item For any $I\subseteq S$ and $s\in I$, we define the \emph{$s$-coarsening of $M_{I^c}$} to be the  $\FF_2$-linear map 
\[\rhosi: M_{I^c}\rightarrow M_{I^c\cup \{s\}}, \quad 
 wW_{\icomp}\mapsto wW_{\icomp\cup \{s\}},
 \]
and we write $\rhosi$ as $\rho_s$ if $I$ is clear from context.
\item 
For each $x\in W$, we define the \emph{$I$-apartment indexed by $x$} to be the set
\[
\Sigma_x:=\{\,xu\,W_{\icomp} : u\in W_I\,\}\subseteq W/W_{\icomp}.
\]
We call each element in $\Sigma_x$ an \emph{$I$-chamber}.
    \end{enumerate}
\end{definition}
 
Note that $\rhosi$ is well defined because $W_{\icomp} \le W_{\icomp\cup \{s\}}$. The map sending each $u\in W_I$ to the coset $xuW_{\icomp}$ is injective, because if $xuW_{\icomp}=xu'W_{\icomp}$ for $u,u'\in W_I$ then we have
$u^{-1}u'\in W_{\icomp}\cap W_I=W_{I\cap {\icomp}}=\{1_W\}$, so each $I$-apartment $\Sigma_x$ consists of $\abs{W_I}$ distinct chambers.

\begin{lemma}
    \label{lem:shadow}
If $s\in I\in \parak{m-r}$ for some $0\le r\le m$, then we have $\rho_s(D_I)=0$. 
\end{lemma}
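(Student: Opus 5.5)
Since $D_I=\pi_{\icomp}(R\parasum{I})$ and $\rho_s$ is $\FF_2$-linear, it suffices to check that $\rho_s(\pi_{\icomp}(wb_I))=0$ for every $w\in W$; every element of $D_I$ is a sum of such images. The key observation is that the composite $\rho_s\circ\pi_{\icomp}$ equals the shadow projection $\pi_{\icomp\cup\{s\}}:R\to M_{\icomp\cup\{s\}}$. Indeed, on a group element $w$ both maps give $wW_{\icomp\cup\{s\}}$: $\pi_{\icomp}$ sends $w$ to $wW_{\icomp}$, and $\rho_s$ then sends this to $wW_{\icomp\cup\{s\}}$. Both maps are linear, so they agree on all of $R$.

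With this identification, we get
\[
\rho_s(\pi_{\icomp}(wb_I))=\pi_{\icomp\cup\{s\}}(wb_I).
\]
We now invoke Proposition \ref{prop:pi}(d) with $K=\icomp\cup\{s\}$ and $J=I$. Because $s\in I$, the intersection $J\cap K$ contains $s$ and is therefore nonempty. Proposition \ref{prop:pi}(d) then gives $\pi_K(wb_I)=0$. Summing over the terms of an arbitrary element $\sum_w a_w wb_I$ of $R\parasum{I}$ yields $\rho_s(D_I)=0$.

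An alternative route goes through Proposition \ref{prop:pi}(c), which reduces the claim to showing $wb_Ib_K=0$; this follows from Lemma \ref{lem: intersection}(b), since $I\cap K\ni s$. Either way the argument is short. The only point that needs any care is the identity $\rho_s\circ\pi_{\icomp}=\pi_{\icomp\cup\{s\}}$, which is immediate from the definitions since $\rho_s$ is well defined. The hypothesis $|I|=m-r$ plays no role beyond fixing notation.
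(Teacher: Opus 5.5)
Your proof is correct. It differs from the paper's only in how the final cancellation is justified. The paper's display \eqref{eq:rhopi} is in effect $\pi_{\icomp\cup\{s\}}(w\parasum{I})$, but instead of citing Proposition~\ref{prop:pi}(d), the paper redoes the cancellation by hand. It shows that the fibers of $u\mapsto wuW_{\icomp\cup\{s\}}$ on $W_I$ are the cosets of $W_I\cap W_{\icomp\cup\{s\}}=W_{\{s\}}$, each of size $2$, so every coset appears an even number of times.

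Your factorization $\rho_s\circ\pi_{\icomp}=\pi_{\icomp\cup\{s\}}$ lets you reuse Proposition~\ref{prop:pi}(d) instead. Its proof, via Lemma~\ref{lem: intersection}(b), is the same parity count in disguise: the coefficients of $w\parasum{I}\parasum{K}$ are $|W_I\cap W_K|=|W_{\{s\}}|=2$. Your version is therefore shorter and avoids duplicating an argument already in the paper.

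The paper's direct computation does buy something. It describes the $\rho_s$-fibers on a translate $wW_I$ explicitly as pairs $\{wu, wus\}$. That is exactly the structure used again in the proof of Proposition~\ref{prop:14}, where $\rho_s(z_\Sigma)=0$ is turned into the relations $a_u=a_{us}$.
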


\begin{proof}
    It suffices to show that $\rho_s(\pi_{\icomp}(w\parasum{I}))=0$ for any $w\in W$.  We have
\begin{equation}
\label{eq:rhopi}
\rho_s\bigl(\pi_{\icomp}(w\parasum{I})\bigr)=\sum_{u\in W_I} wu\,W_{{\icomp}\cup\{s\}} .
\end{equation}
For any $u_1,u_2\in W_I$, we have $wu_1W_{{\icomp}\cup \{s\}}=wu_2W_{{\icomp}\cup \{s\}}$ if and only if 
\[u_1^{-1}u_2\in W_I\cap W_{{\icomp}\cup\{s\}}=W_{I\cap({\icomp}\cup\{s\})}=W_{\{s\}},
\]
where the first set equality holds by \eqref{cond: intersect}. It follows that the fibers of the
map $W_I\to W/W_{{\icomp}\cup\{s\}}$, $u\mapsto wu\,W_{{\icomp}\cup\{s\}}$ are simply the cosets of $W_{\{s\}}$ in $W_I$. All of these cosets have size $\abs{W_{\{s\}}}=\abs{\{1_W,s\}}=2$, so Eq.~\eqref{eq:rhopi} implies that $\rho_s(\pi_{\icomp}(wb_I))=0$, as desired.
\end{proof}

\begin{lemma}[Parabolic Bruhat projection]\label{lem:B}
For each $w\in W$ the set $\{u\in W_I:u\le w\}$ has a unique maximum element, denoted
$\beta_I(w)\in W_I$, with respect to the Bruhat order. Moreover, for every $s\in I$, every $y\in W^{{\icomp}\cup\{s\}}$, and
every $q\in W_{{\icomp}\cup\{s\}}$, we have
\[
\beta_I(yq)\in\beta_I(y)\,W_{\{s\}}=\{\beta_I(y), \beta_I(y)s\}.
\]
\end{lemma}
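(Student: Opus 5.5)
Both parts will be proved with the Demazure product and the subword property (\ref{cond: subword})--(\ref{cond: demazure}).

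\emph{Existence of $\beta_I(w)$.} Fix a reduced word $\bfs$ of $w$ and let $\bfs_I$ be the subword of $\bfs$ consisting of the letters lying in $I$. Set $\beta_I(w):=\delta(\bfs_I)\in W_I$.
\begin{itemize}
\item By (\ref{cond: demazure}), $\bfs_I$ contains a reduced word of $\delta(\bfs_I)$. This is a subword of $\bfs$, so $\delta(\bfs_I)\le w$ by (\ref{cond: subword}).
\item Conversely, let $u\in W_I$ with $u\le w$. By (\ref{cond: subword}), some subword of $\bfs$ is a reduced word of $u$. Every letter of a reduced word of $u\in W_I$ lies in $I$, since the support of an element is independent of the reduced word. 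So this subword is a subword of $\bfs_I$, and (\ref{cond: demazure}) gives $u\le\delta(\bfs_I)$.
\end{itemize}
Hence $\delta(\bfs_I)$ is the unique maximum of $\{u\in W_I:u\le w\}$, independent of the choice of $\bfs$.

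\emph{Second statement.} Let $y\in W^{\icomp\cup\{s\}}$ and $q\in W_{\icomp\cup\{s\}}$. By (\ref{cond: factor}), $\ell(yq)=\ell(y)+\ell(q)$. So if $\bfs$ is a reduced word of $y$ and $\bft$ is a reduced word of $q$, the concatenation $\bfs\bft$ is a reduced word of $yq$. The letters of $\bft$ lie in $\icomp\cup\{s\}$, so the only letters of $\bft$ in $I$ are copies of $s$. Thus $(\bfs\bft)_I=\bfs_I\, s\cdots s$ with some number $k\ge0$ of trailing copies of $s$. By the first part,
\[
\beta_I(yq)=\delta(\bfs_I s\cdots s)=(\cdots(\beta_I(y)\star s)\cdots)\star s .
\]
Since $v\star s\in\{v,vs\}$ and $(v\star s)\star s=v\star s$, this equals $\beta_I(y)$ if $k=0$ and $\beta_I(y)\star s\in\{\beta_I(y),\beta_I(y)s\}$ if $k\ge1$. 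In either case $\beta_I(yq)\in\beta_I(y)W_{\{s\}}$.

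The main point needing care is the well-definedness of $\beta_I(w)$, i.e.\ that $\delta(\bfs_I)$ does not depend on the reduced word $\bfs$. The sandwich argument above settles this, because it identifies $\delta(\bfs_I)$ as the maximum of a set that depends only on $w$. The rest is formal.
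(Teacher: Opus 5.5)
Your proof is correct and follows the paper's argument essentially step for step. Like the paper, you take $\beta_I(w)$ to be the Demazure product of the $I$-letter subword of a reduced word of $w$ and identify it as the maximum via \eqref{cond: subword} and \eqref{cond: demazure}; you then use the length-additivity from \eqref{cond: factor} to see that, for $yq$, this subword is that of $y$ followed by copies of $s$. Your write-up is slightly more careful than the paper's, since you justify why reduced words of $u\in W_I$ use only letters of $I$ and you correctly write $\delta(\bfs_I s\cdots s)$ where the paper loosely writes $\delta(\bfs\bft)$.
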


\begin{proof}
Fix a reduced word $\bfs=s_1\cdots s_m$ of $w$, and let $\bft$ be the subword of $\bfs$ obtained by removing all letters $s_i$ not in $I$. Put $b:=\delta({t})\in W_I$. The word $\bft$ contains a reduced word of $b$ as a subword by \eqref{cond: demazure}, and hence so does $\bfs$, which implies that $b\le w$ by \eqref{cond: subword}.
Conversely, if $u\in W_I$ and $u\le w$, then \eqref{cond: subword} implies that some
subword of $s_1\cdots s_m$ is a reduced word for $u$. Since $u\in W_I$, this subword
uses only letters from $I$, so it is a subword of ${t}$ and $u\le\delta({t})=b$ by \eqref{cond: demazure}.
It follows that $b$ is the unique maximum element of the set $\{u\in W_I: u\le w\}$ (and is independent of the choice of the reduced word $\bft$); in other words, we have $\beta_I(w)=\delta({\bft})$.

For the second assertion, let $s\in I, y\in W^{{\icomp}\cup\{s\}}$ and $q\in W_{\icomp\cup \{s\}}$. By \eqref{cond: factor}, we have $\ell(yq)=\ell(y)+\ell(q)$, so concatenating a reduced word $\bfs$ of $y$ with a reduced word $\bft$ of $q$ produces a reduced word for
$yq$. Since $q\in W_{\icomp\cup\{s\}}$, the only letter from $I$ that may appear in $\bft$ is $s$, so it follows from the last paragraph and the definition of $\star$ that
\[
\beta_I(yq)=\delta(\bfs\bft)=(((\delta(\bfs)\star s)\star) \cdots \star s)=(((\beta_I(y)\star s)\star s)\cdots\star s) \in \beta_I(y)W_{\{s\}}.
\]
This completes the proof.
\end{proof}

\begin{lemma}[Rank-selected retraction]\label{lem:C}
Fix $x\in W^{\icomp}$ and let $C:=xW_{\icomp}\in\Sigma_x$. {Consider the map $r_x:W/W_{\icomp}\to\Sigma_x$ given by $\Omega\mapsto  x r_0(x^{-1}\Omega)$, where $r_0(vW_{I^c})=\beta_I(v)W_{I^c}$ for any minimum-length coset representative $v\in W^{\icomp}$ (and $\beta_I(v)$ is as defined in Lemma \ref{lem:B}).} Then
\begin{itemize}
\item[{\rm (a)}] $r_x$ fixes the apartment $\Sigma_x$ pointwise;
\item[{\rm (b)}] for each $s\in I$, $r_x$ carries every fiber of $\rho_s$ into a single
fiber of $\rho_s$; consequently, there is a map $r^{(s)}:W/W_{\icomp\cup\{s\}}\rightarrow W/W_{\icomp\cup\{s\}}$ such that
$\rho_s\circ r_x=r^{(s)}\circ\rho_s$;
\item[{\rm (c)}] $r_x^{-1}(C)=\{C\}$.
\end{itemize}
\end{lemma}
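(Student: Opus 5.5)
The plan is to translate everything by $x^{-1}$ and reduce each claim to a statement about $r_0$ on $W/W_{\icomp}$. Left multiplication by $x$ is a bijection $W/W_{\icomp}\to W/W_{\icomp}$ commuting with $\rho_s$, and it carries $\Sigma_1=\{uW_{\icomp}:u\in W_I\}$ onto $\Sigma_x$. So it suffices to check the analogous properties of $r_0$ with respect to $\Sigma_1$ and $C_0:=W_{\icomp}$.

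First, $r_0$ is well defined, since each coset has a unique minimal-length representative $v\in W^{\icomp}$ by \eqref{cond: coset_rep}. It lands in $\Sigma_1$ because $\beta_I(v)\in W_I$ by Lemma \ref{lem:B}. Consequently $r_x$ maps $W/W_{\icomp}$ into $\Sigma_x$.

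\emph{Part (a).} A chamber of $\Sigma_1$ has the form $uW_{\icomp}$ with $u\in W_I$. Since $D_R(u)\subseteq I$, we have $u\in W^{\icomp}$, so $u$ is the minimal representative of its coset. Moreover $\beta_I(u)=u$, because $u$ is the Bruhat-maximum of $\{u'\in W_I:u'\le u\}$. Hence $r_0(uW_{\icomp})=uW_{\icomp}$, and translating by $x$ gives (a).

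\emph{Part (b).} Put $J=\icomp\cup\{s\}$. A fiber of $\rho_s$ over $yW_J$, with $y\in W^J$, consists of the cosets $yqW_{\icomp}$ for $q\in W_J$. The key step is to identify the minimal representative of $yqW_{\icomp}$.
\begin{itemize}
\item Factor $q=q^{\icomp}q_{\icomp}$ inside the Coxeter system $(W_J,J)$, using \eqref{cond: intersect} and \eqref{cond: factor}.
\item Claim: $v:=yq^{\icomp}$ lies in $W^{\icomp}$. For $t\in\icomp$, we have $q^{\icomp}t\in W_J$ and $\ell(q^{\icomp}t)>\ell(q^{\icomp})$. Since $y\in W^J$, \eqref{cond: factor} gives $\ell(yq^{\icomp}t)=\ell(y)+\ell(q^{\icomp}t)>\ell(v)$.
\end{itemize}
Hence $v$ is the minimal representative of $yqW_{\icomp}$. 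Lemma \ref{lem:B}, applied with $q^{\icomp}\in W_J$, then gives $\beta_I(v)\in\{\beta_I(y),\beta_I(y)s\}$. Both values have the same image $\beta_I(y)W_J$ under $\rho_s$. So $r_0$ sends the whole fiber over $yW_J$ into the single fiber over $\beta_I(y)W_J$.

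Translating by $x$, every fiber of $\rho_s$ is carried by $r_x$ into one fiber. We may then define $r^{(s)}(\Psi):=\rho_s(r_x(\Omega))$ for any $\Omega$ with $\rho_s(\Omega)=\Psi$; this is independent of the choice of $\Omega$, and it yields $\rho_s\circ r_x=r^{(s)}\circ\rho_s$.

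\emph{Part (c).} Suppose $r_x(\Omega)=C$, and let $v$ be the minimal representative of $x^{-1}\Omega$. Then $\beta_I(v)W_{\icomp}=W_{\icomp}$. This forces $\beta_I(v)\in W_I\cap W_{\icomp}=\{1\}$ by \eqref{cond: intersect}.

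Now $\beta_I(v)=1$ means no nontrivial element of $W_I$ lies below $v$. In particular no $s\in I$ satisfies $s\le v$, so by \eqref{cond: subword} a reduced word of $v$ contains no letter of $I$, and therefore $v\in W_{\icomp}$. Together with $v\in W^{\icomp}$, this gives $v=1$. Hence $x^{-1}\Omega=W_{\icomp}$, i.e.\ $\Omega=C$. Conversely, $r_x(C)=C$ by (a).

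I expect the main obstacle to be part (b): the proof hinges on correctly identifying the minimal coset representative $yq^{\icomp}$, so that Lemma \ref{lem:B} applies to an element of the form $y\cdot(\text{element of }W_J)$ with $y\in W^J$. Parts (a) and (c) are direct.
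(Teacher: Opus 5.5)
Your proposal is correct and follows essentially the same route as the paper: reduce to $x=1_W$ by left translation, obtain (a) from $\beta_I(u)=u$ for $u\in W_I$, obtain (b) by writing the minimal representative as $y$ times an element of $W_{I^c\cup\{s\}}$ with $y\in W^{I^c\cup\{s\}}$ and invoking Lemma~\ref{lem:B}, and obtain (c) from the fact that $\beta_I(v)=1_W$ forces $v\in W_{I^c}\cap W^{I^c}=\{1_W\}$. The only cosmetic difference is in (b), where you construct the minimal representative $yq^{I^c}$ from an arbitrary element $yq$ of the fiber, whereas the paper starts from the minimal representative $w$ and factors it as $w=yq$ with respect to $J$.
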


\begin{proof}
We first treat the case $x=1_W$, where $C=C_0:=W_{\icomp}$, $\Sigma_x=\Sigma_0:=\{uW_{\icomp}:u\in W_I\}$, and $r_x=r_0$. 
Let us prove that the mapping $r_0$ satisfies properties (a)--(c).

(a) Each chamber in $\Sigma$ takes the form $\Omega=uW_{\icomp}\in \Sigma_0$ for some $u\in W_I$. We have $D_R(u)\subseteq I$, so $u\in W^{\icomp}$ and $u$ is the minimal representative of $\Omega$ by \eqref{cond: coset_rep}. 
Since $u\in W_I$, we also have $\beta_I(u)=u$ by Lemma \ref{lem:B}, so $r_0(\Omega)=r_0(uW_{\icomp})=\beta_I(u)W_{\icomp}=uW_{\icomp}=\Omega$.

{(b)} Fix $s\in I$ and put $J:={\icomp}\cup\{s\}$. If $wW_{\icomp}$ is a coset in a fiber $\rho_s^{-1}(yW_J)$, where $w\in W^{\icomp}$ and $y\in W^J$ are the respective minimal representatives of $wW_{\icomp}$ and $yW_J$, then we have $w=yq$ for some $q\in W_J$ such that $\ell(w)=\ell(y)+\ell(q)$ by \eqref{cond: factor}, and \eqref{cond: intersect} then implies that $q\in W^{\icomp}$ (because the condition $\ell(w)=\ell(y)+\ell(q)$ implies $D_R(q)\cap \icomp\subseteq D_R(w)\cap \icomp=\varnothing$). Lemma \ref{lem:B} now implies that
\[
r_0(wW_{\icomp})=\beta_I(w)\,W_{\icomp}=
\beta_I(yq)\,W_{\icomp}\in \{\beta_I(y)\,W_{\icomp}, \beta_I(y)s\,W_{\icomp}\}\subseteq \rho_s^{-1}(\beta_I(y)W_J),
\]
so $r_0$ maps every coset in the fiber $\rho_s^{-1}(yW_J)$ to the fiber $\rho_s^{-1}(\beta_I(y)W_J)$. Consequently, we have $\rho_s\circ r_0=r_0^{(s)}\circ\rho_s$
for the map $r_0^{(s)}: W/W_{J}\rightarrow W/W_J, yW_J\mapsto \beta_I(y)W_J$.

{(c)} Suppose $r_0(vW_{\icomp})=C_0$ with $v\in W^{\icomp}$. Then
$\beta_I(v)\in W_{\icomp}\cap W_I=\{1_W\}$, so $\beta_I(v)=1_W$. If $v\ne 1_W$, fix a reduced word $\bfs$ for $v$. If all letters in $\bfs$ are in 
${\icomp}$, then $v\in W_{\icomp}\cap W^{\icomp}=\{1_W\}$, a contradiction; if some letter $s$ in $\bfs$ is from $I$,  then \eqref{cond: demazure} implies that $s\le\beta_I(v)=e$, which is again a contradiction. Therefore $v=1_W$ and
$r_0^{-1}(C_0)=\{C_0\}$.

We have proved that $r_x=r_0$ satisfies conditions (a)--(c) when $x=e$. 
Let us prove the case of a general $x$. We have $x^{-1}\Sigma_x=\Sigma_0$ and $x^{-1}C=C_0$, and $\rho_s$ is $W$-equivariant, so
(a)--(c) transfer verbatim, with $r^{(s)}(Q):=x\,r_0^{(s)}(x^{-1}Q)$ for every coset $Q$ of $W_{\icomp\cup\{s\}}$.
\end{proof}

We will refer to each map of the form $r_x$ as a \emph{rank-selected retraction}. 
These retraction maps form the core component of the following proof, as well as of the decoder of Coxeter codes 
in Sec.~\ref{sec:decoding}; see the example in Sec.~\ref{sec: Example} for an illustration.

\begin{proposition}[Shadow distance]\label{prop:14}
Let $I\subseteq S$. The distance of the shadow code equals
\[
\dist(D_I)=|W_I| .
\]
\end{proposition}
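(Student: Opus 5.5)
The argument has two halves. The upper bound $\dist(D_I)\le|W_I|$ comes from a single codeword. The lower bound comes from the rank-selected retractions of Lemma \ref{lem:C}, used together with the vanishing $\rho_s(D_I)=0$ of Lemma \ref{lem:shadow}.

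\emph{Upper bound.} Consider $\pi_{\icomp}(b_I)=\sum_{u\in W_I}uW_{\icomp}$. The cosets $uW_{\icomp}$, $u\in W_I$, are pairwise distinct, as noted after Definition \ref{def:apartment}. Hence this element is nonzero, its support is exactly the apartment $\Sigma_{1_W}$, and its weight is $|W_I|$.

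\emph{Lower bound.} Let $0\neq z\in D_I$ and regard $z$ as a finite set of cosets in $W/W_{\icomp}$.
\begin{enumerate}
\item Choose a chamber $C=xW_{\icomp}$ in the support of $z$, with $x\in W^{\icomp}$ its minimal representative.
\item Push $z$ forward along $r_x$ and set $z':=(r_x)_*(z)\in \FF_2[\Sigma_x]\subseteq M_{\icomp}$. Here $(r_x)_*$ is the linear extension of $r_x$, so the coefficient of a chamber $\Omega'$ in $z'$ is the sum of the coefficients of $z$ over $r_x^{-1}(\Omega')$.
\item By Lemma \ref{lem:C}(c), $r_x^{-1}(C)=\{C\}$. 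So the coefficient of $C$ in $z'$ equals its coefficient in $z$, which is $1$. In particular $z'\neq 0$.
\item Since pushing forward can only merge support, $\wt(z')\le\wt(z)$, in the same way as Proposition \ref{prop:pi}(b).
\item By Lemma \ref{lem:C}(b), $\rho_s\circ r_x=r^{(s)}\circ\rho_s$ for every $s\in I$. Linearizing gives $\rho_s(z')=(r^{(s)})_*(\rho_s(z))=0$, because $\rho_s(z)=0$ by Lemma \ref{lem:shadow}.
\end{enumerate}
This reduces the problem to the following claim.

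\emph{Claim.} A nonzero $\FF_2$-combination $z'$ of chambers of the apartment $\Sigma_x$ with $\rho_s(z')=0$ for all $s\in I$ is the full sum over $\Sigma_x$. In particular it has weight $|W_I|$.

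\emph{Proof of the claim.} Identify $\Sigma_x$ with $W_I$ via $u\mapsto xuW_{\icomp}$. For $s\in I$, the chambers $xuW_{\icomp}$ and $xu'W_{\icomp}$ lie in the same fiber of $\rho_s$ exactly when $u^{-1}u'\in W_I\cap W_{\icomp\cup\{s\}}=W_{\{s\}}$, by the computation in the proof of Lemma \ref{lem:shadow}. So within the apartment the fibers of $\rho_s$ are precisely the pairs $\{u,us\}$.

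Note that $\rho_s(z')$ is computed in all of $W/W_{\icomp\cup\{s\}}$, and other parts of $z'$ could in principle contribute to the same fiber. This does not happen, since $z'$ is supported on $\Sigma_x$ and the fibers restricted to $\Sigma_x$ are exactly these pairs. Therefore $\rho_s(z')=0$ says that the coefficient function $a:W_I\to\FF_2$ of $z'$ satisfies $a(u)=a(us)$ for all $u\in W_I$ and $s\in I$.

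Since $I$ generates $W_I$, the function $a$ is constant, and it is nonzero because $z'\neq0$. Hence $\wt(z')=|W_I|$.

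Combining, $\wt(z)\ge\wt(z')=|W_I|$, which proves the lower bound.

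The main obstacle is step 5, the compatibility of the pushforward with $\rho_s$. This rests on Lemma \ref{lem:C}(b), and what must be checked is that linearizing the fiber-preserving map yields exactly $\rho_s\circ(r_x)_*=(r^{(s)})_*\circ\rho_s$. This holds coefficientwise because $r_x$ maps each fiber of $\rho_s$ into a single fiber. The remaining steps are routine.
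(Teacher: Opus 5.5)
Your proposal is correct and follows the paper's route. The upper bound comes from $\pi_{I^c}(b_I)$. The lower bound pushes $z$ forward along the retraction $r_x$ centered at a chamber in the support, and uses Lemma~\ref{lem:C}(b),(c) together with Lemma~\ref{lem:shadow} to force the coefficients on $\Sigma_x$ to be constant along the Cayley graph of $(W_I,I)$. The differences from the paper are cosmetic: you check explicitly that $\pi_{I^c}(b_I)\neq 0$, which the paper leaves implicit, and your argument covers $I=\varnothing$ directly instead of as a separate case.
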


\begin{proof} 
If $I=\varnothing$, then ${\icomp}=S$, and
$D_I=D_\varnothing=\FF_2[W/W_S]\cong\FF_2$ has distance $1=|W_I|$ as desired, so assume $I\ne\varnothing$.

The element $\pi_{\icomp}(b_I)\in D_I$ has weight at most $\wt(b_I)=\abs{W_I}$ by Proposition \ref{prop:pi}(b), so we have $\dist(D_I)\le|W_I|$ and it remains to prove the reverse inequality. Let $z\in D_I\backslash\{0\}$. 
Choose a coset $C=xW_{\icomp}\in\supp(z)$ with minimal representative $x\in W^{\icomp}$. This is a chamber in the $I$-apartment $\Sigma:=\Sigma_x$.
Now let
$r_x:W/W_{\icomp}\to\Sigma_x$ be as in Lemma~\ref{lem:C}. Overloading the notation, let us extend $r_x$ linearly to
a map $r_x: M_{\icomp}\to\FF_2[\Sigma_x]\subseteq M_{\icomp}$, and set $z_\Sigma:=r_x(z)$. 
Note that $z_\Sigma\ne0$: by property (c) of Lemma~\ref{lem:C}, $C$ is the only chamber in $\Sigma_x$ mapped to $C$ by $r_x$, so the
coefficient of $C$ in $z_\Sigma$ equals its coefficient in $z$, namely, $1$.

For any $s\in I$, we have
$\rho_s(z)=0$ by Lemma~\ref{lem:shadow}, so it follows from 
 property (b) of Lemma~\ref{lem:C} that
\[
\rho_s(z_\Sigma)=\rho_s\,r_x(z)=r^{(s)}\,\rho_s(z)=0;
\]
that is, $z_\Sigma$ vanishes under every $s$-coarsening map.

Now write $z_\Sigma=\sum_{u\in W_I}a_u\,xu\,W_{\icomp}$ with $a_u\in\FF_2$. Fix $s\in I$.
Two chambers $xuW_{\icomp},\,xvW_{\icomp}$ of $\Sigma_x$ lie in the same $\rho_s$-fiber if and only if 
$xuW_{{\icomp}\cup\{s\}}=xvW_{{\icomp}\cup\{s\}}$, i.e.,\ if and only if
\[
u^{-1}v\in W_I\cap W_{{\icomp}\cup\{s\}}=W_{\{s\}}=\{e,s\}.
\]
This implies that the
$\rho_s$-fibers in $\Sigma_x$ are exactly the pairs $\{xuW_{\icomp},\,xus\,W_{\icomp}\}$ where $u\in W_I$. The fact that
$\rho_s(z_\Sigma)=0$ now forces $a_u+a_{us}=0$, i.e., 
\[
 a_u=a_{us},\quad\text{for all } u\in W_I,\ s\in I.
\]
These relations say that $u\mapsto a_u$ is constant along the edges of the Cayley
graph of $W_I$ with generating set $I$. Since this graph is
connected, all the coefficients $a_u$ are equal; since $z_\Sigma\ne0$, they all equal $1$. It follows that
$z_\Sigma=\sum_{u\in W_I}xu\,W_{\icomp}$ and $\wt(z_\Sigma)=|W_I|$.

Finally, the linear map $r_x$ sends each basis element of $M_{\icomp}$ to a basis element of $\FF_2[\Sigma_x]$, so it can only merge
or cancel coordinates and never increases weight, and therefore
$\wt(z)\ge\wt(z_\Sigma)=|W_I|$. As our choice of $z\in D_I\backslash\{0\}$ was arbitrary, it follows that $\dist(D_I)\ge|W_I|$, as desired.
\end{proof}

\begin{proof}[Proof of Theorem \ref{thm: main}]
We prove this by induction on $r$. For the base case, the code $\Coxeter{0}=\{0^{|W|},1^{|W|}\}$ has distance $|W|=d_0$.
Now let us assume that the statement is true for $\Coxeter{r-1}$ for some $r\ge 1$. Let $c\in \Coxeter{r}\backslash
\{0\}$. If $\phi_r(c)=0$, then by Theorem~\ref{thm:ses} we conclude that $c\in \Coxeter{r-1}$, and thus
$\wt(c)\ge d_{r-1}$ by the induction hypothesis. 
Next, notice that $d_{r-1}\ge d_r$. Indeed, let $I\subseteq S, |I|=m-r+1$ be such that $d_{r-1}=|W_I|$.
Take $J\subset I$ with $|J|=m-r$ and observe that $d_{r-1}= |W_I|\ge |W_J|\ge d_{r}$. In conclusion, $\wt(c)\ge d_r$. 

Now suppose that $\phi_r(c)\ne 0$, then for some $I$ of size $m-r$, we have $\pi_{I^c}(c)\in D_I\backslash\{ 0\}$, and so
$\wt(\pi_{I^c}(c))\ge |W_I|$ by Proposition~\ref{prop:14}. Now Proposition~\ref{prop:pi}(b) implies that $\wt(c)\ge \wt(\pi_{I^c}(c))\ge d_r$, completing the induction. Finally, note that if $d_r$ is attained for $W_I$ with some
$I$ of size $m-r$, then the code $\Coxeter{r}$ contains the indicator vector of $b_I$, proving the equality in the statement.
\end{proof}

\begin{remark}
We note that in the case where $r\ge \lfloor \frac m2\rfloor$, the conclusion of Theorem 
\ref{thm: main} is already proved in \cite[Cor.4.7]{coble2026coxeter}, where the proof 
uses the well-known classification of finite Coxeter systems and the fact that all such systems have bipartite Coxeter graphs. Our approach via the reduction to shadow codes is somewhat more involved by comparison, but it allows us to treat all possible values of $r$ at once, without relying on the classification.
\end{remark}


\section{Majority-logic decoding of Coxeter codes}\label{sec:decoding}
Shortly after the discovery of RM codes by Muller \cite{muller1954application}, Reed introduced an algorithm for their decoding that corrects any combination of errors up to half the code's minimum distance \cite{reed1954codes}, which we now generalize to arbitrary Coxeter codes. To build intuition, we begin with a brief overview of Reed's decoder.
The encoding map of $RM(r,m)$ sends an {\em information vector} $\mu\in\ZZ_2^{\binom{m}{\le r}}$ to a codeword $c\in \ZZ_2^{2^m}$, where $\binom{m}{\le r}:=\sum_{i=0}^r\binom mi$ is the code's dimension. Specifically, consider the set $V_r=\{v\in \ZZ_2^m:0\le\wt(v)\le r\}$ with $|V_r|=\binom m{\le r}$, and write $\mu=(\mu_v)_{v\in V_r}$.  Define a Boolean polynomial $f(x_1,\dots,x_m)=\sum_{v\in V_r} \mu_v x_1^{v_1}\dots x_m^{v_m}$; then $\mu$ is encoded into the code vector $c=(c_w, w\in \ZZ^{2^m})$ such that
    $$
   c_w=f(w_1,\dots,w_m) \qquad\text{for all }w=(w_1,\dots, w_m).
   $$
  
Reed's algorithm recursively recovers the coefficients $\mu_v$ for $v$ of decreasing Hamming weight and is 
a combination of the following observations:
\begin{enumerate}
    \item The code $RM(r,m)$ is spanned by indicators of all affine subspaces $\langle e_{i_1},\dots,e_{i_k}\rangle$ of $\ZZ_2^m$ of dimension $k\ge (m-r)$;
  
    \item Given an $(m-r)$-dimensional (linear) subspace $L\subset \ZZ_2^m$ spanned by $e_1,\dots,e_{m-r}$ (say) the subspace spanned by $e_{m-r+1},\dots, e_m$ and its cosets each intersect $L$ on a single symbol. These
    intersections form parity checks for the coefficients $\mu_v$ with $\wt(v)=r$. If the count of errors is not too high,
    the majority vote recovers all $\mu_v$ correctly. Subtracting the corresponding part of $f(x)$, we reduce the decoding 
    problem to the code $RM(r-1,m)$, whose distance is twice that of $RM(r,m)$, and repeat the majority vote for the coefficients $\mu_v, \wt(v)=r-1$. {At the last step of the recursion, $r=0$,  the remaining part of $f(x)$ is just the constant term $\mu_0$, and the corresponding code is the repetition code $RM(0,m)$.}
\end{enumerate}
See \cite[Sec.13.6]{MS77} for a detailed presentation. 

In this section, we extend this idea to Coxeter codes. While this requires some adjustments, the general approach still relies on majority votes, and it specializes to the original RM decoder sketched above if $W=\ZZ_2^m$.

Fix $r\in\{0,1,\dots,m\}$. By Proposition~\ref{prop:basis},  the code $\Coxeter{r}$ has the descent basis
\[
  \mathcal{C}_{m-r}=\{\,e_w : |D_R(w)|\ge m-r\},
  \qquad
  e_w:=w\,b_{D_R(w)}=\1_{wW_{D_R(w)}}.
\]
Thus a codeword $c\in \Coxeter{r}$ is the
encoding of its \emph{information symbols} $(\mu_w)_{|D_R(w)|\ge m-r}$, $\mu_w\in\FF_2$
through
   \begin{equation}\label{eq:codeword}
  c=\sum_{w:\,|D_R(w)|\ge m-r}\mu_w\,e_w .
  \end{equation}
Recall that the shadow map $\phi_k$ defined in \eqref{eq: phi map} 
isolates a single descent layer,
\begin{equation}\label{eq:isolate}
  \pi_{\icomp}(c)=\sum_{w:\,D_R(w)=I}\mu_w\,\pi_{\icomp}(wb_I)\ \in\ D_I .
\end{equation}

\subsection{Parity checks from apartments}\label{ss:votes}
Let $y=c+\epsilon\in \FF_2^{|W|}$, where $c\in \Coxeter{r}$ and $\epsilon$ is an error vector. We construct a set of equations (parity checks) that recover the coefficients $\mu_w$ by a majority vote.
For $k\le r,$ fix $I\in\mathcal{I}_{m-k}$ and $w$ with $D_R(w)=I$. Then
$w\in W^{\icomp}$ is the minimal-length representative of the chamber $C_w:=wW_{\icomp}$, and
$\Sigma_w=\{wu\,W_\icomp:u\in W_I\}$ is the apartment `centered' at $C_w$. Let
$r_w:W/W_\icomp\to\Sigma_w$ be the rank-selected retraction of Lemma~\ref{lem:C} centered at $C_w$, with $r_w(\Omega)=w\,r_0(w^{-1}\Omega)$ where $r_0(vW_\icomp)=\beta_I(v)W_\icomp$ for any minimal
representative $v\in W^{\icomp}$. 
Its fibers partition $W/W_\icomp$, so pulling back along
$W\to W/W_\icomp$ partitions $W$ into $|W_I|$ blocks indexed by $u\in W_I$:
\begin{equation}\label{eq:blocks}
  T_w^{(u)}:=\bigl\{\,g\in W:\ r_w(gW_\icomp)=wu\,W_\icomp\,\bigr\}
           =\bigl\{\,g\in W:\ \beta_I\!\bigl((w^{-1}g)^{\icomp}\bigr)=u\,\bigr\},
\end{equation}
where $(\,\cdot\,)^{\icomp}$ denotes the minimal-length representative modulo $W_\icomp$; see \eqref{cond: factor}. The \emph{$u$-th parity check at $w$} of $y$ is
\[
  V_w^{(u)}(y):=\sum_{g\in T_w^{(u)}}y_g\ \in\ \FF_2,\qquad u\in W_I .
\]
The following lemma is formulated to fit the recursion of the decoding
algorithm of Sec.~\ref{sec: decoder} below; in particular, $k$ refers to the order
of the subcode $\Coxeter{k}\subset \Coxeter{r}$ that arises in the corresponding recursion step.
\begin{lemma}\label{lem: votes} \phantom{Lemma + corollary}\\
{\rm(a)} For $k\in\{0,1,\dots,r\}$,
let $I\in\mathcal{I}_{m-k}$, let $w$ satisfy $D_R(w)=I$, and let
$c'=\sum_{w'}\mu_{w'}e_{w'}\in \Coxeter{k}$ be a codeword with
\begin{equation}\label{eq:hyp}
  \mu_{w'}=0\quad\text{for every }w'\text{ with }D_R(w')=I\text{ and }w'>w .
\end{equation}
Then $V_w^{(u)}(c')=\mu_w$ for every $u\in W_I$. 

\noindent{\rm(b)} Let $\epsilon\in \FF_2^{|W|}$ and let $y=c'+\epsilon$.
If $\wt(\epsilon)<\tfrac12|W_I|$, then $\text{\rm maj}_{u\in W_I}V_w^{(u)}(y)=\mu_w$.
\end{lemma}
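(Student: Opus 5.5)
The plan for part (a) is to reinterpret each parity check $V_w^{(u)}$ as a coordinate of a retracted shadow, and then reuse the argument of Proposition~\ref{prop:14}. First I will show that for any $y\in\FF_2^{|W|}$, the value $V_w^{(u)}(y)$ is the coefficient of the chamber $wuW_{\icomp}$ in $r_w(\pi_{\icomp}(y))$. Here $r_w$ is extended linearly as in the proof of Proposition~\ref{prop:14}. The block $T_w^{(u)}$ is the union of the cosets $\Omega\in r_w^{-1}(wuW_{\icomp})$. Summing $y_g$ over $g\in T_w^{(u)}$ therefore amounts to first summing over each such $\Omega$, which by Proposition~\ref{prop:pi}(a) gives the $\Omega$-coefficient of $\pi_{\icomp}(y)$, and then summing over the fiber.

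Next I apply this with $y=c'$ and set $z:=\pi_{\icomp}(c')$. Expanding $c'$ in the basis $\mathcal C_{m-k}$, every $e_{w'}$ with $D_R(w')\ne I$ and $|D_R(w')|\ge m-k=|I|$ has $D_R(w')\cap\icomp\neq\varnothing$. By Proposition~\ref{prop:pi}(d) such terms vanish, so
\[
z=\sum_{D_R(w')=I}\mu_{w'}\,\pi_{\icomp}(w'b_I)\in D_I .
\]
The proof of Proposition~\ref{prop:14} (Lemma~\ref{lem:shadow} together with Lemma~\ref{lem:C}(b)) shows that $r_w(z)$ is constant on the apartment $\Sigma_w$. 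By Lemma~\ref{lem:C}(c), the common value equals the coefficient of $C_w=wW_{\icomp}$ in $z$; note that $w\in W^{\icomp}$ since $D_R(w)=I$. Thus $V_w^{(u)}(c')$ is independent of $u$ and equals the $C_w$-coefficient of $z$. This reduces (a) to a computation of that coefficient.

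The main step is the following claim. For $D_R(w')=I$, the $C_w$-coefficient of $\pi_{\icomp}(w'b_I)$ is $1$ if $w'=w$, and it can be nonzero only if $w'>w$. This coefficient is the parity of the number of $v\in W_I$ with $w'vW_{\icomp}=wW_{\icomp}$.
\begin{itemize}
\item If $w'=w$, then $v\in W_I\cap W_{\icomp}=\{1_W\}$, so exactly one $v$ contributes and the coefficient is $1$.
\item In general, suppose $w'v\in wW_{\icomp}$ for some $v\in W_I$. By \eqref{cond: coset_rep}, $w'$ is the maximal element of $w'W_I$. Writing $w'=x\longest{I}$ with lengths adding, we get $w'v=x(\longest{I}v)$ with $\longest{I}v\le\longest{I}$, so the subword property \eqref{cond: subword} gives $w'v\le w'$.
\item Likewise $w$ is the minimal representative of $w'vW_{\icomp}$. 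By \eqref{cond: factor} a reduced word of $w'v$ begins with a reduced word of $w$, so $w\le w'v$.
\end{itemize}
Hence $w\le w'$, and if $w'\ne w$ then $w'>w$. Hypothesis \eqref{eq:hyp} kills all terms with $w'>w$, so the $C_w$-coefficient of $z$ is $\mu_w$. I expect this Bruhat comparison to be the only delicate point; the rest is bookkeeping.

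For (b), linearity gives $V_w^{(u)}(y)=\mu_w+V_w^{(u)}(\epsilon)$. The blocks $T_w^{(u)}$, $u\in W_I$, are pairwise disjoint, so at most $\wt(\epsilon)<\tfrac12|W_I|$ of the checks satisfy $V_w^{(u)}(\epsilon)=1$. Therefore more than half of the $|W_I|$ checks return $\mu_w$, and the majority vote outputs $\mu_w$.
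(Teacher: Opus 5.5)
Your proposal is correct and follows the paper's proof step for step. You identify $V_w^{(u)}(y)$ with the $wuW_{I^c}$-coefficient of $r_w(\pi_{I^c}(y))$, reuse the argument of Proposition~\ref{prop:14} to show it equals the $C_w$-coefficient of $\pi_{I^c}(c')$, and get triangularity from the same Bruhat chain $w\le w'v\le w'$, with part (b) matching the paper's disjoint-blocks count. The details you add, namely why only the terms with $D_R(w')=I$ survive the $I^c$-shadow and why $w'=x\,w_0^I$ is Bruhat-largest in $w'W_I$, are valid and only implicit in the paper.
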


\begin{proof} (a)
Each block $T_w^{(u)}$ is a union of left $W_\icomp$-cosets, so $V_w^{(u)}(y)$ equals the
sum, over the fiber $r_w^{-1}(wuW_\icomp)$, of the coordinates of $\pi_\icomp(y)$; that is,
$V_w^{(u)}(y)$ is the coefficient of $wuW_\icomp$ in the vector
$r_w\bigl(\pi_\icomp(y)\bigr)\in\FF_2[\Sigma_w]$.

We will first assume that $\epsilon=0$ and so $y=c'$, and discuss the general $\epsilon$ later. By \eqref{eq:isolate} we have $z:=\pi_\icomp (c')\in D_I$, so $\rho_s(z)=0$ for all $s\in I$ by
Lemma~\ref{lem:shadow}. The argument of
Proposition~\ref{prop:14} then shows that $r_w(z)$ has all its coefficients on $\Sigma_w$ equal,
with common value the coefficient of $z$ at the center $C_w$ (the center has the unique
preimage $r_w^{-1}(C_w)=\{C_w\}$). Hence
\[
  V_w^{(u)}(c')=[\,\pi_\icomp (c')\,]_{C_w}\quad\text{for all }u\in W_I .
\]
Now $\pi_\icomp(w'b_I)=\sum_{u'\in W_I}w'u'W_\icomp$ is the indicator of the chamber set $\Sigma_{w'}$ with every coefficient equal to 1 since the terms in the sum are pairwise distinct basis vectors of $M_{I^c}$. Therefore, its coefficient at $C_w$ is
$N(w,w'):=\1[\,C_w\in\Sigma_{w'}\,]\in\{0,1\}$, and by \eqref{eq:isolate}
\begin{equation}\label{eq: unitriangular}
  [\,\pi_\icomp(c')\,]_{C_w}=\sum_{w':\,D_R(w')=I}\mu_{w'}\,N(w,w') .
\end{equation}
Suppose that $N(w,w')=1$, i.e., $C_w\in\Sigma_{w'}$, or $wW_\icomp\cap w'W_I\neq\varnothing$.
Choose $a\in W_I$ with $w'a\in wW_\icomp$. As $D_R(w')\supseteq I$, the element $w'$ is the
longest and hence the Bruhat-largest element of $w'W_I$ by \eqref{cond: coset_rep}, so $w'a\le w'$;
and $w=(w'a)^{\icomp}\le w'a$ since a minimal coset representative lies below every
element of its coset. 
Thus $N(w,w')=1$ implies that $w\le w'$, and $N(w,w)=1$ since $C_w\in \Sigma_w$, so the matrix $N$ is triangular with 1's on the main diagonal. In the sum in \eqref{eq: unitriangular} every term with $w'\ne w$ vanishes. Specifically, the terms with $w'< w$ vanish because $N(w,w')=0$, and
the terms with $w'>w$ vanish because $\mu_{w'}=0$ by \eqref{eq:hyp}. Only the diagonal term $N(w,w)\mu_w$ remains,
so $[\,\pi_\icomp(c')\,]_{C_w}=\mu_w$, and therefore $V_w^{(u)}(c')=\mu_w$ for all $u$,
proving Part~(a).

(b) For $y=c'+\epsilon$ we have
$V_w^{(u)}(y)=\mu_w+\sum_{g\in T_w^{(u)}}e_g$. As the blocks $T_w^{(u)}$ are pairwise
disjoint, the number of $u$ with $\sum_{g\in T_w^{(u)}}e_g\neq0$ is at most $\wt(\epsilon)$; so if $\wt(\epsilon)<\tfrac12|W_I|$, then fewer than half of the $|W_I|$ votes are flipped, and
the majority returns $\mu_w$.
\end{proof}

\begin{remark}\label{remark: triangular} Assumption \eqref{eq:hyp} is added to simplify the processing; without it,
we would still be able to arrive at the same conclusions, but the assembled votes would form a
triangular system of equations, requiring an extra step for the recovery of the individual coefficients. 
\end{remark}

\subsection{The decoder}\label{sec: decoder}
\begin{quote}
\noindent\textbf{Majority-logic decoder for $\Coxeter{r}$.}\\
\emph{Input}: a vector $y=c+\epsilon\in \FF_2^{|W|}$\\
 \emph{Output}: symbols
$(\mu_w)_{|D_R(w)|\ge m-r}$, cf. Eq.~\eqref{eq:codeword}
\begin{enumerate}
\item For $k=r,\,r-1,\,\dots,\,0$:
  \begin{enumerate}
  \item For each $I\in\mathcal{I}_{m-k}$ and each $w$ with $D_R(w)=I$, processed in
        order of \emph{decreasing} length $\ell(w)$:
    \begin{enumerate}
    \item compute the votes $V_w^{(u)}(y)$ for $u\in W_I$ via \eqref{eq:blocks}, and set
          \[
            \mu_w:=\text{maj}_{u\in W_I}\,V_w^{(u)}(y);
          \]
    \item if $\mu_w=1$, update $y\leftarrow y+e_w$.
    \end{enumerate}
  \end{enumerate}
\item Return $(\mu_w)_w$.
\end{enumerate}
\end{quote}

For $k=0$ one has $\mathcal{I}_m=\{S\}$, $W_S=W$ and ${\icomp}=\varnothing$; the $\icomp$-shadow projection is the identity. 
In this case, the only element $w$ with $D_R(w)=I$ is the longest element $w=w_0$, and there is a single $\icomp$-apartment, where each chamber (a coset of $W/\{e\}$) corresponds to a single group element. 
Step (1)(a)(i) of the decoder degenerates to a global majority of the coordinates of
$y$, so Step (1)(a)(ii) returns $\mu_{w_0}$, consistently with $\Coxeter{0}=\{0,\1_W\}$ and $e_{w_0}=\1_W$. This case
is fully analogous to the decoder of RM codes described above.

In accordance with Remark~\ref{remark: triangular}, decoding the symbols of a fixed type $I$ in decreasing length removes, before $w$ is
treated, every element $w'>w$ of the same type; this guarantees that the hypothesis \eqref{eq:hyp} holds
and lets the vote read off $\mu_w$ directly.  Peeling the layer $\mathcal{I}_{m-k}$ before $\mathcal{I}_{m-k+1}$ is exactly the
passage from $\Coxeter{k}$ to $\ker\phi_k=\Coxeter{k-1}$ in Theorem~\ref{thm:ses}.

\begin{theorem}\label{thm:decoder}
Let $c\in \Coxeter{r}$ and $y=c+\epsilon$ with $\wt(\epsilon)\le d_r/2-1$. Then the
majority-logic decoder returns the information symbols $\mu_w$ of $c$; see Eq.~\eqref{eq:codeword}. \end{theorem}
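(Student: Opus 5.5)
The plan is to prove, by induction along the order in which the decoder processes the information symbols, that every symbol it outputs is correct. The invariant is that whenever the decoder is about to treat a pair $(k,w)$ with $D_R(w)=I\in\mathcal{I}_{m-k}$, the current word has the form $y=c'+\epsilon$. Here $\epsilon$ is the original error vector, never modified. The vector $c'=c+\sum\mu_{w''}e_{w''}$ is obtained from $c$ by adding $e_{w''}$ for the already-processed $w''$ whose decoded value is $1$. If all earlier decisions were correct, then $c'=\sum_{w'}\mu_{w'}e_{w'}$ is the expansion \eqref{eq:codeword} of $c$ with all symbols already processed removed. This uses the fact that $\mathcal{C}_{m-r}$ is a basis (Proposition \ref{prop:basis}), so the coefficients of $c'$ in this basis are exactly the untouched $\mu_{w'}$.

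Next I check the hypotheses of Lemma \ref{lem: votes} at the moment $w$ is treated.

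First, $c'\in\Coxeter{k}$. All $w''$ with $|D_R(w'')|=m-k''$ for $k''>k$ were processed in earlier rounds of the outer loop and their symbols removed. Hence the surviving terms $e_{w'}$ all have $|D_R(w')|\ge m-k$, and $\mathcal{C}_{m-k}$ is a basis of $\Coxeter{k}$.

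Second, hypothesis \eqref{eq:hyp} holds: if $D_R(w')=I$ and $w'>w$ in Bruhat order, then $\ell(w')>\ell(w)$. Since symbols of type $I$ are processed in decreasing length, $w'$ was already processed, so $\mu_{w'}$ has been cleared from $c'$. Elements of equal length are Bruhat-incomparable, so the tie-breaking order is irrelevant. Elements of other types $J\ne I$ with $|J|=m-k$ impose no constraint, since the hypothesis only concerns type $I$.

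Third, the weight condition needs $\wt(\epsilon)<\tfrac12|W_I|$. Since $|I|=m-k\ge m-r$, choose $J\subseteq I$ with $|J|=m-r$. Then $|W_I|\ge|W_J|\ge d_r$, by the same monotonicity used in the proof of Theorem \ref{thm: main}. So $\wt(\epsilon)\le d_r/2-1<d_r/2\le|W_I|/2$.

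Lemma \ref{lem: votes}(b) then gives that the majority equals the true $\mu_w$. The update $y\leftarrow y+\mu_we_w$ removes exactly the term $\mu_we_w$ from $c'$, preserving the invariant. The base of the induction is the start, with $c'=c$ and nothing processed. At termination every symbol has been recovered correctly.

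The main obstacle I expect is the bookkeeping in this invariant, specifically justifying \eqref{eq:hyp}. Its key input is the standard fact that $w'>w$ in Bruhat order implies $\ell(w')>\ell(w)$, together with the observation that different types within the same layer do not interfere, because Lemma \ref{lem: votes} already handles their contributions. The rest is routine.
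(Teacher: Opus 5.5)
Your proposal is correct and follows essentially the same route as the paper. Both arguments induct along the decoder's processing order with the invariant that the current word is $c'+\epsilon$, then check three things before invoking Lemma~\ref{lem: votes}(b): that $c'\in\Coxeter{k}$, that \eqref{eq:hyp} holds because $w'>w$ in Bruhat order forces $\ell(w')>\ell(w)$, and that $|W_I|\ge d_r>2\wt(\epsilon)$. Your bookkeeping is slightly more careful than the paper's, since you note explicitly that already-processed symbols of other types in the same layer, and ties in length, do not affect \eqref{eq:hyp}.
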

\begin{proof}
Put $t:=\wt(\epsilon)\le d_r/2-1$, so $2t<d_r$. We show by downward induction
on $k$ that, when the outer loop reaches index $k$, the current word equals
$y=c_k+\epsilon$, where $c_k:=\sum_{w:\,|D_R(w)|\ge m-k}\mu_w e_w$ is the part of $c$ of descent
number $\ge m-k$, and that all symbols $\mu_w$ with $|D_R(w)|=m-k$ are then computed
correctly.

At step $k$, the residual is $y=c_k+\epsilon$ (for $k=r$ this is the input
$y=c+\epsilon$). Fix $I\in\mathcal{I}_{m-k}$ and process the elements $w$ with $D_R(w)=I$ in the order of
decreasing length. Suppose that the currently processed element is $w$, then the current residual word is
$y=c'+\epsilon$, where $c'$ is obtained from $c_k$ by deleting every already processed symbol as in Step (1)(a)(ii). The deleted symbols
are precisely the same-type elements whose lengths exceed $\ell(w)$; in particular, every $w'$ with
$D_R(w')=I$ and $w'>w$ (and thus also $\ell(w')>\ell(w)$) has been deleted, so $c'$
satisfies \eqref{eq:hyp}. Moreover, $c'\in \Coxeter{k}$, so Lemma~\ref{lem: votes} implies that
the vote value in the absence of errors is $\mu_w$. Since 
   \begin{equation}\label{eq: t}
   |W_I|\ge d_k\ge d_r>2t\ge 2\wt(\epsilon), 
   \end{equation}
the majority returns $\mu_w$ correctly. Step (1)(a)(ii) then subtracts $\mu_w e_w$, so the structure is preserved for the next step.

Once all $w$ with $|D_R(w)|=m-k$ are processed, the residual vector is $c_{k-1}+\epsilon$, where the error $\epsilon$ is unchanged from $y$, so the recursion can continue.
At $k=0$ the single symbol $\mu_{w_0}$ is recovered by the global majority,
again correct as above. After the loop, the residual vector is $\epsilon$ and all
information symbols have been recovered.
\end{proof}

Theorem \ref{thm:decoder} implies that errors of multiplicity $t$ are corrected as long as $t$ satisfies Eq.\eqref{eq: t}. Since $d_r$ is even, this shows that $\dist(\Coxeter{r})\ge d_r-1$, stopping one short of the exact value. The loss occurs because of the ties that can arise in the case of exactly $d_r/2$ errors; the combinatorial proof of Theorem~\ref{thm: main} corresponds to error {\em detection} rather than correction and thus avoids this issue.
Phrased differently, if the decoding algorithm is used to detect errors and returns a codeword only if all the votes agree, then to fail it needs $\wt(\epsilon)\ge d_r$, showing again that $\dist(\Coxeter{r})=d_r$.

\begin{remark}(The case of RM codes) The opening paragraph of this section contains an informal description of Reed's decoding of RM codes. Relying on the above discussion, we present it concisely and more formally.
When $W=\ZZ_2^m$ (type $mA_1$), $W_I$ is the coordinate subgroup on $I$, a chamber
$wW_{\icomp}$ is the $r$-flat through $w$ parallel to $\langle I\rangle$, and the
apartment $\Sigma_w$ is the partition of $\ZZ_2^m$ into the cosets of that flat. Since the
flats partition the group,
the retraction map is trivial, the blocks $T_w^{(u)}$ coincide with chambers, the matrix
$N(w,w')$ of Lemma~\ref{lem: votes} is the identity, and the decreasing-length ordering is
vacuous. The votes recover all degree-$r$ coefficients in parallel, and this constitutes one step of the original Reed decoding procedure.   
\end{remark}

\subsection{Example}\label{sec: Example} Let $W=S_4, S=\{s_i=(i,i+1): 1\le i\le 3\}$. 
Consider a decoding step of the code $\coxeter{S_4}{}{1}$ with the parameters $[n=24, \dim=12, \dist=4]$.
We will illustrate the forming of parities for one coefficient $\mu_w$ in \eqref{eq:codeword}, taking  $w=s_1s_3=2143$ (in one-line notation). We note that the operations described below do not depend on $c$ or on $\epsilon$: the processing
is exactly the same, and if the error weight exceeds the correction radius, the vote may return an incorrect value.

\begin{table}[ht]
{\small \begin{tabular}{c|cccccc}
 \diagbox{chambers}{apartments}& $\{1,2\}$ & $\{1,3\}$ & $\{1,4\}$ & $\{2,3\}$ & $\{2,4\}$ & $\{3,4\}$\\
\hline
$(1,2)$ & $\cdot$        & $1342$ & $1432$ & $\cdot$ & $\cdot$ & $\cdot$\\
$(1,3)$ & $1243$         & $\cdot$ & $1423$ & $\cdot$ & $\cdot$ & $\cdot$\\
$(1,4)$ & $1234$         & $1324$ & $\cdot$ & $\cdot$ & $\cdot$ & $\cdot$\\
$(2,1)$ & $\cdot$        & $\cdot$ & $\cdot$ & $2341$ & $2431$ & $\cdot$\\
$(2,3)$ & $\mathbf{2143}$& $\cdot$ & $\cdot$ & $\cdot$ & $2413$ & $\cdot$\\
$(2,4)$ & $2134$         & $\cdot$ & $\cdot$ & $2314$ & $\cdot$ & $\cdot$\\
$(3,1)$ & $\cdot$        & $\cdot$ & $\cdot$ & $3241$ & $\cdot$ & $3421$\\
$(3,2)$ & $\cdot$        & $3142$ & $\cdot$ & $\cdot$ & $\cdot$ & $3412$\\
$(3,4)$ & $\cdot$        & $3124$ & $\cdot$ & $3214$ & $\cdot$ & $\cdot$\\
$(4,1)$ & $\cdot$        & $\cdot$ & $\cdot$ & $\cdot$ & $4231$ & $4321$\\
$(4,2)$ & $\cdot$        & $\cdot$ & $4132$ & $\cdot$ & $\cdot$ & $4312$\\
$(4,3)$ & $\cdot$        & $\cdot$ & $4123$ & $\cdot$ & $4213$ & $\cdot$\\
\end{tabular}}
\caption{Partition of $W$ for the decoding of $\mu_w,w=2143$.}
\label{table: grid}
\end{table}

Table~\ref{table: grid} shows the partition of $W$ for the recovery of $\mu_w$. It depends on $w$ through
$I=D_R(w)=\{s_1,s_3\}$. We have $W_I=\langle s_1\rangle\times \langle s_3\rangle$ and $W_{\icomp}=\langle s_2\rangle$, so every $I$-chamber $uW_{\icomp}$ contains two elements determined by the pair $(p,q):=(u(1),u(4))$, and every $I$-apartment is determined by four elements $x$ that form the set $\{x(1),x(2)\}$. The rows of Table~\ref{table: grid} show the partition of $W$ into 12 chambers, with the pairs $(p,q)$ as the row labels, and each column lists the four elements $x$ indexing the same apartment, with the shared set $\{x(1), x(2)\}$ as the corresponding label.

The retraction mapping of Lemma~\ref{lem:C} provides a way to assemble the votes for $\mu_w$ from the coordinates of $y$ into the blocks $T_w^{(u)}$ as in Eq.~\eqref{eq:blocks}. 
Recall that the retraction $r_w$ centered at $w$ is defined as $r_w(\Omega)=w\,r_0(w^{-1}\Omega)$  with $r_0(vW_\icomp)=\beta_I(v)W_\icomp$ for the minimal
representatives $v\in W^{\icomp}$. 
In our case, $w=w^{-1}=s_1s_3=2143=(12)(34)$ (in the cycle notation), and our labelling scheme for the chambers guarantees that  $w(p,q)=(w(p),w(q))$. Furthermore, if $v\in W^{\icomp}$, $vW_\icomp=(p,q)$, and $\beta_I(v)W_\icomp=(p',q')$ (in other words, if $v$ and $\beta_I(v)$ send $(1,4)$ to $(p,q)$ and $(p',q')$, respectively, in the coordinate-wise action), then we have the following:
\begin{itemize}
\item[(a)] If $p=v(1)=1$, then $v$ can be generated by $S\setminus \{s_1\}$, so its reduced words do not include $s_1$ and the same is true for $\beta_I(v)$, and therefore $p'=1$.
\item[(b)] If $p=v(1)>1$, then $v$ cannot be generated by $S\setminus\{s_1\}$, so
every reduced word of $v$ includes $s_1$; equivalently, $s_1\le v$ in the Bruhat
order. Since $s_1\in W_I$, the maximality of $\beta_I(v)$ in $\{u\in W_I:u\le v\}$
(Lemma~\ref{lem:B}) gives $s_1\le\beta_I(v)$, so $\beta_I(v)\in\{s_1,s_1s_3\}$;
in either case, $p'=2$.
\end{itemize}
Similarly, we have $q'=4$ if $q=u(4)=4$ and $q'=3$ otherwise, so we have 
 \begin{equation}
  \label{eq:pq1}
   r_0((p,q))=(p', q'), \quad \text{where $p'=1$ if $p=1$, else 2; and \text{$q'= 4$ if $q=4$, else 3}}.
\end{equation}
It further follows that
  \begin{equation}
  \label{eq:pq}
   r_w((p,q))=(p'', q''), \quad \text{where $p''=2$ if $p=2$, else 1; and \text{$q''= 3$ if $q=3$, else 4}}.
\end{equation}
For example, if $(p,q)=(2,1)$, then $r_w((p,q))=(2,4)$.

We can use Eq. \eqref{eq:pq} to calculate the fibers of the retraction $r_w$. The outcome of these calculations is summarized in Table~\ref{table: votes}. Note that the ``center'' chamber $C_w:=wW_{\icomp}$ has preimage $r_w^{-1}(C_w)=\{C_w\}$, as promised by Lemma~\ref{lem:C}(c).

\begin{table}[ht]{\small
\begin{tabular}{|c l p{3in}|}\hline
anchor & block (chambers) & group elements ($\supp(e_w)$ in bold)\\
\hline
$(2,3)$ & $(2,3)$ & $\mathbf{2143},\ 2413$ \\
$(2,4)$ & $(2,4),(2,1)$ & $\mathbf{2134},\ 2314,\ 2341,\ 2431$ \\
$(1,3)$ & $(1,3),(4,3)$ & $\mathbf{1243},\ 1423,\ 4123,\ 4213$ \\
$(1,4)$ & $(1,4),(1,2),(3,1),$  & $\mathbf{1234},\ 1324,\ 1342,\ 1432,\ 3241,\ 3421,\ 3142$, \\
&$(3,2),(3,4),(4,1),(4,2)$ &\qquad$\ 3412,\ 3124,\ 3214,\ 4231,\ 4321,\ 4132,\ 4312$\\
\hline
\end{tabular}
}
\caption{The votes table for the coefficient $\mu_{w=(2143)}$. The elements in the support of $e_w$, i.e., the elements of $w W_I$, are typeset in boldface. Anchors refer to the chambers that contain those elements.}\label{table: votes}
\end{table}

\begin{remark}\label{remark: noncommuting}
For better readability, we chose $W_I=\langle s_1,s_3\rangle$ with commuting generators, resulting in $|W_I|=4$. For the case of non-commuting generators, we could take $I=\{s_1,s_2\}$, then apartments become hexagons rather than squares, every pair of apartments share 2 chambers (apartments glue along edges), the retraction does not factor because $\beta_I$ is a genuine Demazure product on $\{s_1,s_2\}$-words, and
the forming of the blocks $T_w^{(u)}$ depends on the choice of the direction ($s_1$ or $s_2$).
\end{remark}


\section{Concluding remarks}
General Coxeter codes are defined in combinatorial-geometric terms as in Eq.~\eqref{eq:codeword}. RM codes, in addition, can be 
described relying on polynomial formalism as discussed at the beginning of Sec.~\ref{sec:decoding}. It is not clear to us whether
the polynomial description affords a well-formed extension to the case of general (finite) Coxeter groups.

In \cite[App.~B]{coble2025thesis}, Coble defines the so-called subapartment codes and residue codes, which are two kinds of building-theoretic extensions of Coxeter codes. By definition, a subapartment code is the span of the indicator vectors of subapartments of rank $m-r$ in a finite spherical building of rank $m$, whereas residue codes are spanned by the indicator vectors of residues of rank $m-r$. When the building is the Coxeter complex of $(W,S)$, these two definitions recover the Coxeter code $C_W(r)$. 
This suggests several natural questions beyond the finite Coxeter-group setting. The most immediate questions concern the basic coding-theoretic parameters: determining the dimension and minimum distance of these building codes. It would also be interesting to understand whether the shadow-code and rank-selected retraction methods used here have analogues and whether they can be used to prove dimension or distance statements in this more general setting.

\section{Acknowledgments}
The authors are grateful to Myna Vajha for suggesting exploring links between a 
majority-logic decoder of Coxeter codes and the minimum distance conjecture, and to James Davis for helpful discussions.
A.B. was supported in part by NSF grants CCF-2330909 and CCF-2526035; T.X. was supported in part by an AMS–Simons Research Enhancement Grant for PUI Faculty.
This project was initiated while the authors took part in the workshop ``The Interplay Between Distance Geometry, Combinatorics, and Coding Theory" organized at the Brin Mathematics Research Center at the University of Maryland, College Park, in November 2025. 
The authors used large language models, including ChatGPT-5.4 and Claude Sonnet 4.6, as exploratory tools in connection with the problem studied in this paper. Their use was limited to brainstorming, discussion of possible approaches, and preliminary checking of ideas. All mathematical arguments were independently written by the authors, who take full responsibility for the content of the paper.

\end{document}